%% file: main.tex
\documentclass[11pt]{article}

\usepackage{amsmath,amssymb,amsfonts}
\usepackage{amsthm}
\usepackage{mathtools}
\usepackage{bm}
\usepackage{booktabs}
\usepackage{graphicx}
\usepackage{placeins}
\usepackage{float}
\usepackage{cite}
\usepackage[margin=1in]{geometry}
\usepackage[hidelinks]{hyperref}

\newtheorem{theorem}{Theorem}
\newtheorem{proposition}{Proposition}
\newtheorem{corollary}{Corollary}

\newcommand{\R}{\mathbb{R}}
\newcommand{\rank}{\operatorname{rank}}

\title{Beyond Observability: Sharp Sample Bounds for Minimax State Reconstruction}
\hypersetup{
  pdftitle={Beyond Observability: Sharp Sample Bounds for Minimax State Reconstruction},
  pdfkeywords={State estimation, observability, minimax reconstruction, bounded measurement noise, optimal sampling, linear systems}
}
\ifdefined\IJCAnonymous
\author{}
\hypersetup{pdfauthor={}}
\else
\author{Jerzy Baranowski\thanks{Corresponding author: \href{mailto:jb@agh.edu.pl}{jb@agh.edu.pl}. ORCID: \href{https://orcid.org/0000-0003-3313-581X}{0000-0003-3313-581X}.}\\
\small Department of Automatic Control and Robotics\\
\small AGH University of Krakow, Poland}
\hypersetup{pdfauthor={Jerzy Baranowski}}
\fi
\date{}

\begin{document}

\maketitle

\begin{abstract}
We study how many output samples match the worst-case state reconstruction error obtainable from a continuous history under bounded measurement noise. For observable scalar-output linear time-varying systems with $n$ states, $n$ samples suffice for exact reconstruction, whereas at most $n^2$ attain the continuous-history minimax error in the $\ell_\infty$ norm. An explicit construction proves sharpness even for real-analytic systems. For time-invariant systems, strict total positivity yields one $n$-sample schedule optimal for every state coordinate; endpoint-observed tridiagonal cooperative chains satisfy this conclusion in physical coordinates. Interior sensing behaves differently: an explicit three-state family requires four samples despite positivity and distinct real eigenvalues. The strict three-state gap persists under small stabilizing damping. A nilpotent four-state example and an interval proof for a six-state mechanical system give gaps. The results identify how temporal sampling and the prescribed state-error geometry determine the measurements required for minimax reconstruction. Time-domain experiments illustrate the resulting estimation bounds.
\end{abstract}

\noindent\textbf{Keywords:} State estimation; observability; minimax reconstruction; bounded measurement noise; optimal sampling; linear systems.

\input{submission/word-count}

\input{sections/01_introduction}
\input{sections/02_ltv_online}
\input{sections/03_minimax}
\input{sections/04_memory_budgets}
\input{sections/02b_history_compression}
\input{sections/02_finite_memory}
\input{sections/05_structured_lti}
\input{sections/06_counterexample}
\input{sections/07_building}
\input{sections/07b_numerical_comparison}
\input{sections/10_numerics}
\FloatBarrier
\input{sections/11_discussion}

\ifdefined\IJCAnonymous\else
\section*{Acknowledgment}
The author thanks OpenAI for including him in the ChatGPT for Academic Researchers program.

\section*{Funding}
This work was internally funded through the AGH Subvention for Scientific Research (internal application no.~18277).
\fi

\section*{Declaration of generative AI use}
The author used OpenAI ChatGPT during preparation of this article for editorial and organizational assistance, language refinement, and assistance in developing and checking mathematical and computational drafts. This assistance affected the presentation throughout the manuscript. All mathematical derivations, numerical results, citations, scientific claims, and final text were reviewed and validated by the author, who assumes full responsibility for the content.

\section*{Competing interests}
The author reports no competing interests.

\section*{Data availability statement}
No external datasets were used in this theoretical study. The models and parameters used in the numerical examples are specified in the article.
\FloatBarrier
\bibliographystyle{plainurl}
\ifdefined\IJCAnonymous
\bibliography{references,references_recent,references_pending_anonymous}
\else
\bibliography{references,references_recent,references_pending}
\fi

\appendix
\input{sections/appendix_history_compression}
\input{sections/appendix_lti_gaps}

\end{document}

%% file: submission/word-count.tex
\noindent\textbf{Word count:} approximately 7400, including references, appendices, and captions; excluding mathematical expressions.

%% file: sections/01_introduction.tex
\section{Introduction}
\label{sec:introduction}

State estimation from past input and output measurements is a basic task in control. When only a finite part of the output history is retained, the available measurements must contain enough information to determine the state. In the presence of measurement errors, they must also support a sufficiently accurate estimate. These two requirements lead to different questions about the number and timing of measurements.

Exact reconstruction from finitely many output samples has a long history. Gilchrist's $n$-observability formalized recovery from finite output histories \cite{gilchrist1966nobservability}, while finite-memory observers and deadbeat observers in continuous time express the estimate through delayed input and output values \cite{medvedev1992finite,medvedev1994deadbeat,medvedev1994feedforward}. Such constructions extend to linear time-varying systems and uncertain models \cite{graton2014fmo,graton2014sensitivity,medvedev2010sensitivity}. More recent work considers delayed measurements \cite{bobtsov2021delayedltv,cacace2022delays}, aperiodic sampling and jitter \cite{sferlazza2021jitter,huff2024aperiodic}, sporadic data with timestamp uncertainty \cite{guarro2024hybrid}, and direct reconstruction within observer architectures \cite{meiners2025direct}. Delayed-output observers have also been developed for nonlinear systems \cite{arezki2026delayed}.

The accuracy of a reconstruction depends on measurement timing as well as observability. Nonuniform sampling has been used to reduce absolute and relative measurement-error propagation \cite{troch1973sampling,dormido1979relative,delasen1981nonperiodic}. In stochastic estimation, measurement allocation balances the number and quality of observations \cite{aspeel2021budget}. Deterministic bounded-error and set-membership methods instead seek guarantees valid for every admissible disturbance \cite{rohou2021bounded,althoff2021guaranteed,mazenc2022sampled,meslem2025delayedset}. Optimal sampling under unknown-but-bounded errors was studied by Belforte, Bona, and Frediani \cite{belforte1987sampling}. Micchelli relates bounded-error parameter uncertainty to optimal recovery and derives extremal properties for strictly totally positive measurement matrices \cite{micchelli1989sampling}. Related worst-case experiment-design problems use radius-of-information criteria \cite{tanaskovic2014worstcase}. For polynomial regression, optimal worst-case designs also depend on the basis in which parameter uncertainty is measured \cite{belforte2002polynomial}. These approaches motivate studying the smallest estimation error attainable under a prescribed bound on measurement noise in fixed state coordinates.

For nonlinear systems, singular sampling can change the noise exponent in local minimax reconstruction \cite{singularSampling2026}. We consider linear systems, where the noise dependence remains linear, and study the number of samples required to attain continuous-history accuracy.

Finite measurement sets that reproduce a continuous design are also studied in optimal experimental design. For dynamic parameter-estimation problems, Jost, Sager, and Le construct finite grids equivalent to continuous Fisher-information designs \cite{jost2017measurementtimes}. Classical finite-atomic representation results provide support-reduction tools for such questions \cite{schmudgen2017moment}. In the present problem, each selected time supplies an actual output value. The same selection must recover all state coordinates, and convex combinations of information rows need not be realizable as individual measurements. The question concerns the cardinality of this common temporal selection rather than the support of one coordinate recovery functional.

This leaves a question beyond exact observability: how many measurements are needed to attain the estimation accuracy available from the complete continuous output history? For an observable scalar-output $n$-state system, $n$ suitably chosen measurements determine the state in the absence of noise. They need not attain the continuous-history minimax error under bounded noise. We distinguish the minimum number for exact reconstruction, $m_{\mathrm{obs}}$, from the minimum number attaining that error, $m_{\mathrm{sat}}$.

Our main result gives the sharp bound
\begin{equation}
    m_{\mathrm{obs}}=n\le m_{\mathrm{sat}}\le n^2
\end{equation}
for continuous scalar-output linear time-varying (LTV) systems. We prove sharpness by explicitly constructing, for every $n\ge2$, a real-analytic scalar-output LTV system requiring exactly $n^2$ distinct sampling times. Thus time regularity alone cannot improve the bound. For linear time-invariant (LTI) systems, we establish simultaneous coordinatewise attainment by $n$ samples under strict total positivity and for endpoint-observed tridiagonal cooperative chains in physical coordinates. In contrast, a three-state family with an interior measurement requires four samples over an explicit range of its parameters, and a strict gap persists under small stabilizing damping. An exact nilpotent $4\to5$ example and a computer-assisted mechanical $6\to7$ result further distinguish the role of coordinate geometry from observability. All results use specified state and noise norms and concern a fixed reconstruction time.

Time-domain experiments compare estimated trajectories with their error bounds and illustrate the effect of measurement timing and of an additional sample.

Section~\ref{sec:problem} formulates the estimation problem and defines the two measurement requirements. Section~\ref{sec:history-compression} proves the sharp quadratic bound. Section~\ref{sec:lti} treats time-invariant systems and the strict-gap examples. Section~\ref{sec:numerics} presents the numerical experiments, and Section~\ref{sec:discussion} discusses the scope and open questions. Detailed geometric, algebraic, and computer-assisted proofs are collected in the appendices.

%% file: sections/02_ltv_online.tex
\section{Problem formulation}
\label{sec:problem}

\subsection{System, measurements, and reconstruction}
\label{sec:ltv-online}

Consider the linear time-varying system
\begin{equation}
    \dot x(t)=A(t)x(t)+B(t)u(t),\qquad
    y(t)=C(t)x(t)+v(t),
\end{equation}
where $x(t)\in\mathbb R^n$, the output is scalar, and the system matrices are continuous. The input $u$ and the system matrices are known. Measurement noise satisfies $|v(s)|\le\delta$. For a fixed horizon $T>0$, the task at time $t$ is to estimate the current state $x(t)$ from measurements in the trailing window $[t-T,t]$ and the input history over that window.

Choose distinct relative delays $0\le\tau_1<\cdots<\tau_m\le T$. The available output values are $y(t-\tau_i)$, as illustrated in Figure~\ref{fig:finite-memory-window}. Their dependence on the known input can be removed. If $\Phi(t,s)$ denotes the state-transition matrix, define
\begin{equation}
    q_i[u;t]=C(t-\tau_i)\int_{t-\tau_i}^{t}
                  \Phi(t-\tau_i,s)B(s)u(s)\,ds,
    \qquad z_i(t)=y(t-\tau_i)+q_i[u;t].
\end{equation}
Variation of constants then gives
\begin{equation}
    z_i(t)=C(t-\tau_i)\Phi(t-\tau_i,t)x(t)+v(t-\tau_i).
\end{equation}
The correction vanishes when $u=0$.

Let $Z_\tau(t)$ and $V_\tau(t)$ collect the corrected measurements and their errors. With
\begin{equation}
    \mathcal D_\tau(t)=\operatorname{col}_i
           C(t-\tau_i)\Phi(t-\tau_i,t),
    \label{eq:ltv-delay-matrix}
\end{equation}
the reconstruction problem becomes
\begin{equation}
    Z_\tau(t)=\mathcal D_\tau(t)x(t)+V_\tau(t),
    \qquad\|V_\tau(t)\|_\infty\le\delta.
    \label{eq:ltv-delay-model}
\end{equation}
When $\mathcal D_\tau(t)$ has full column rank, a left inverse defines the estimator
\begin{equation}
    \widehat x(t)=L_\tau(t)Z_\tau(t),\qquad
    L_\tau(t)\mathcal D_\tau(t)=I.
\end{equation}
It reconstructs the state exactly without noise. With noise, its error is
\begin{equation}
    e(t):=\widehat x(t)-x(t)=L_\tau(t)V_\tau(t),\qquad
    \|e(t)\|_\infty\le\delta\|L_\tau(t)\|_{\infty\to\infty}.
    \label{eq:basic-estimation-bound}
\end{equation}
Thus the induced norm of the left inverse is the amplification factor for bounded measurement errors. We seek both an appropriate selection of measurement times and a left inverse with the smallest such factor.

\begin{figure}[t]
    \centering
    \includegraphics[width=0.82\linewidth]{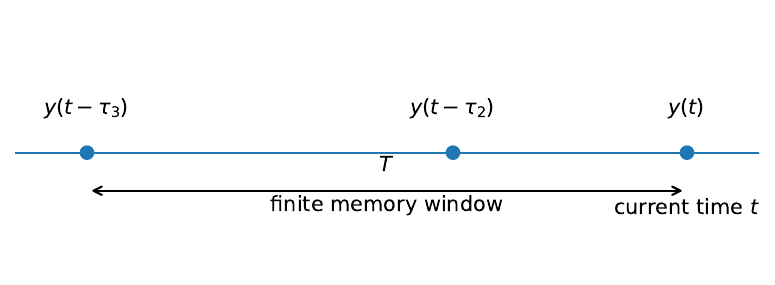}
    \caption{Measurements available for current-state reconstruction at time $t$. Selected output values lie in the window $[t-T,t]$; the known-input correction is applied before reconstruction.}
    \label{fig:finite-memory-window}
\end{figure}

All sample counts below refer to the measurements used in one reconstruction. Maintaining their availability as the window moves can require additional storage.

%% file: sections/03_minimax.tex
\subsection{The smallest worst-case error}
\label{sec:minimax}

Fix a reconstruction time and write $D=\mathcal D_\tau(t)$. For a given left inverse, the bound in \eqref{eq:basic-estimation-bound} is sharp over $\|V\|_\infty\le\delta$: noise values with signs matching a row of $L$ attain its row norm. Consequently the best linear estimator minimizes $\|L\|_{\infty\to\infty}$ subject to $LD=I$.

The choice of inverse matters when there are more measurements than state coordinates. Different generalized inverses can optimize different control objectives \cite{hunek2021theta,feliks2023generalized,hunekfeliks2025predictive}, and generalized inverses also occur in observer design \cite{meng2024generalizedobserver}. Here the objective is fixed by componentwise bounded measurement errors and the $\ell_\infty$ state-error norm.

Define the reconstruction margin
\begin{equation}
    \mu_\infty(D)=\min_{\|x\|_\infty=1}\|Dx\|_\infty.
    \label{eq:delay-margin}
\end{equation}
A small margin means that a unit change in the state can produce only a small change in the measurements. Bounded noise can then conceal a large state error.

\begin{proposition}[Minimax reconstruction error]
\label{prop:minimax-left-inverse}
For full-column-rank $D$,
\begin{equation}
    \gamma_\infty(D):=\min_{LD=I}\|L\|_{\infty\to\infty}
       =\frac{1}{\mu_\infty(D)},\qquad
    R^\star(D,\delta)=\delta\gamma_\infty(D).
    \label{eq:minimax-left-inverse}
\end{equation}
The worst-case radius $R^\star$ is also optimal among arbitrary estimators on the unbounded state space.
\end{proposition}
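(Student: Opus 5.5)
We prove three claims in turn: the lower bound $\|L\|_{\infty\to\infty}\ge 1/\mu_\infty(D)$ for every left inverse, the existence of a left inverse attaining it, and the optimality of $\delta/\mu_\infty(D)$ among arbitrary (possibly nonlinear) estimators.

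\emph{Lower bound.} Let $LD=I$, and let $x^\star$ attain the minimum in \eqref{eq:delay-margin}. This minimum exists because the unit sphere is compact and $D$ has full column rank, so $\mu_\infty(D)>0$. Then
\begin{equation*}
1=\|x^\star\|_\infty=\|LDx^\star\|_\infty\le\|L\|_{\infty\to\infty}\,\mu_\infty(D),
\end{equation*}
which gives $\gamma_\infty(D)\ge 1/\mu_\infty(D)$.

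\emph{Attainment.} This is the main step, and we use duality. Coordinatewise, the linear functional $\phi_k(Dx)=x_k$ is defined on the subspace $\operatorname{ran}D\subset(\R^m,\|\cdot\|_\infty)$. Its norm there is
\begin{equation*}
\sup_{x\neq0}\frac{|x_k|}{\|Dx\|_\infty}\le\sup_{x\neq0}\frac{\|x\|_\infty}{\|Dx\|_\infty}=\frac{1}{\mu_\infty(D)}.
\end{equation*}
By Hahn--Banach (finite-dimensional, so elementary), $\phi_k$ extends to all of $\R^m$ with the same norm. The norm of that extension is the $\ell_1$ norm of a row vector $\ell_k$ satisfying $\ell_kD=e_k^\top$. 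Stacking the rows gives $L$ with $LD=I$ and
\begin{equation*}
\|L\|_{\infty\to\infty}=\max_k\|\ell_k\|_1\le 1/\mu_\infty(D).
\end{equation*}
The key point is that the $\ell_\infty\to\ell_\infty$ induced norm is the maximum row $\ell_1$ norm, so the problem decouples over rows and each row can be extended separately. This decoupling is exactly what makes the identity hold for $\ell_\infty$. Alternatively, one could argue through LP duality for each row, $\min\{\|\ell\|_1:\ell D=e_k^\top\}=\max\{|x_k|:\|Dx\|_\infty\le1\}$. The minimum over $L$ is attained because the feasible set is closed and the norm is coercive. Together with the lower bound, this proves $\gamma_\infty(D)=1/\mu_\infty(D)$. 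The identity $R^\star=\delta\gamma_\infty$ for linear estimators then follows from the sharpness remark preceding the proposition.

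\emph{Arbitrary estimators.} Take any map $\widehat x:\R^m\to\R^n$ and the extremal $x^\star$ above. Consider the two states $\pm\,\delta x^\star/\mu_\infty(D)$. Their noiseless outputs $\pm\,\delta Dx^\star/\mu_\infty(D)$ each have $\ell_\infty$ norm $\delta$. Choose the noise
\begin{equation*}
V_\pm=\mp\,\delta Dx^\star/\mu_\infty(D),
\end{equation*}
which is admissible, so both states produce the identical data $Z=0$. The estimate $\widehat x(0)$ is therefore the same for both, and the triangle inequality shows its error against one of the two states is at least half their distance, namely $\delta/\mu_\infty(D)$. Hence no estimator does better on the unbounded state space, and the linear optimum is minimax overall.
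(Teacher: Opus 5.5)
Your proof is correct and follows essentially the paper's route. Your row-by-row Hahn--Banach extension is the same per-coordinate duality $\min_{D^Ta=e_j}\|a\|_1=\max_{\|Dx\|_\infty\le1}x_j$ that the paper obtains by linear programming, with your separate lower bound via $x^\star$ replacing the paper's symmetry step $\max_j\alpha_j(D)=1/\mu_\infty(D)$. Your indistinguishability argument, in which the states $\pm\delta x^\star/\mu_\infty(D)$ both produce zero data, is exactly the paper's $x$ versus $-x$ argument.
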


\begin{proof}
The rows of a left inverse can be optimized separately. For coordinate $j$, linear-programming duality gives
\begin{equation}
    \alpha_j(D):=\min_{D^Ta=e_j}\|a\|_1
      =\max_{\|Dx\|_\infty\le1}x_j.
    \label{eq:coordinate-duality}
\end{equation}
By symmetry of the feasible set,
\begin{equation}
    \max_j\alpha_j(D)
      =\max_{\|Dx\|_\infty\le1}\|x\|_\infty
      =\frac{1}{\mu_\infty(D)}.
    \label{eq:alpha-mu-equivalence}
\end{equation}
This proves the optimal linear gain. For every $x$ satisfying $\|Dx\|_\infty\le\delta$, the states $x$ and $-x$ can both produce the observation zero under admissible errors. No estimator can distinguish them, so their separation gives the same lower bound.
\end{proof}

The coordinate system and error norm remain fixed when measurement selections are compared. If physical variables require different scales, positive diagonal matrices $D_x,D_v$ give the normalized matrix $D_v^{-1}DD_x$; the result then applies in the corresponding weighted norms. We use the stated coordinates and the unweighted $\ell_\infty$ norm below.

%% file: sections/04_memory_budgets.tex
\subsection{Number of measurements for exact and minimax reconstruction}
\label{sec:memory-budgets}

At a fixed current time $t$, let
\begin{equation}
    h_t(\tau):=C(t-\tau)\Phi(t-\tau,t),\qquad 0\le\tau\le T.
\end{equation}
The complete continuous history has margin
\begin{equation}
    \mu_\infty^{\mathrm{cont}}(t,T)
    :=\min_{\|x\|_\infty=1}\sup_{\tau\in[0,T]}|h_t(\tau)x|.
    \label{eq:continuous-history-margin}
\end{equation}
For a finite schedule, write $\mu_\infty(\tau;t):=\mu_\infty(\mathcal D_\tau(t))$. We distinguish
\begin{align}
    m_{\mathrm{obs}}(t,T)
    &:=\min\{m:\exists\tau\in[0,T]^m,\ \rank\mathcal D_\tau(t)=n\},\\
    m_{\mathrm{sat}}(t,T)
    &:=\min\{m:\exists\tau\in[0,T]^m,\
       \mu_\infty(\tau;t)=\mu_\infty^{\mathrm{cont}}(t,T)\}.
\end{align}
The first number is the minimum for exact recovery without noise. The second is the minimum attaining the continuous-history minimax radius $\delta/\mu_\infty^{\mathrm{cont}}(t,T)$. We call this equality of finite and continuous-history accuracy \emph{saturation}. Each selected time contributes one actual output measurement.

We assume the continuous information rows span $\mathbb R^n$. Then $n$ rows can be selected as a basis, while fewer cannot have full rank. Thus $m_{\mathrm{obs}}=n$ and $m_{\mathrm{sat}}\ge n$. The next section determines how much larger the second number can be.

%% file: sections/02b_history_compression.tex
\section{The sharp quadratic sample bound}
\label{sec:history-compression}

The main result answers the sample-count question without imposing additional structure on the continuous temporal information curve. Its upper bound gives a finite replacement for the full history; its sharpness statement rules out a smaller dimension-only bound.

\begin{theorem}[Finite temporal-history saturation]
\label{thm:n2-saturation}
For every observable continuous scalar-output $n$-state LTV temporal information curve,
\begin{equation}
    \boxed{
    n=m_{\mathrm{obs}}(t,T)
    \le
    m_{\mathrm{sat}}(t,T)
    \le n^2.
    }
\end{equation}
Moreover, the quadratic dimension-only bound is sharp even for real-analytic systems: for every $n\ge2$ there exists a scalar-output LTV system with real-analytic $A(\cdot)$ and $C(\cdot)$ for which
\begin{equation}
    \boxed{
    m_{\mathrm{sat}}(t,T)=n^2.
    }
\end{equation}
\end{theorem}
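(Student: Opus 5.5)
The upper bound would follow from a coordinatewise semi-infinite version of the duality \eqref{eq:coordinate-duality}, followed by a Carath\'eodory reduction. Fix $t$ and write $h(\tau)=h_t(\tau)$. Since $h$ is continuous on the compact interval $[0,T]$, the set
\[
K=\operatorname{conv}\{\pm h(\tau):\tau\in[0,T]\}
\]
is a compact, symmetric, convex set. Observability (the rows span $\mathbb R^n$) gives $K$ nonempty interior. For a finite schedule, the optimal coordinate gain $\alpha_j(D)=\min\{\|a\|_1: D^Ta=e_j\}$ is the gauge of $e_j$ with respect to $\operatorname{conv}\{\pm\text{rows of }D\}$.

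I would first show that the continuous-history analogue
\[
\alpha_j^{\mathrm{cont}}:=\sup\{x_j:\sup_\tau|h(\tau)x|\le1\}
\]
equals the gauge $g_K(e_j)$. By symmetry of the feasible set, exactly as in \eqref{eq:alpha-mu-equivalence}, this gives $1/\mu_\infty^{\mathrm{cont}}=\max_j g_K(e_j)$. The identification is the support-function/gauge duality between $K$ and its polar $\{x:\sup_\tau |h(\tau)x|\le 1\}$. Because $K$ is closed, the polar of the polar is $K$ itself, so no closure issues arise.

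Next, for each $j$ the point $e_j/g_K(e_j)$ lies on $\partial K$. Take a supporting hyperplane there. The exposed face $F_j$ it cuts out has dimension at most $n-1$, and it is the convex hull of those points $\pm h(\tau)$ that lie on the hyperplane, a compact set. Carath\'eodory's theorem in this $(n-1)$-dimensional face therefore writes $e_j/g_K(e_j)$ as a convex combination of at most $n$ points $\pm h(\tau_{j,1}),\dots,\pm h(\tau_{j,n})$.

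Let $S$ be the union of these $n$ supports over $j=1,\dots,n$. Then $|S|\le n^2$, and repeated times only lower the count. The rows indexed by $S$ span every $e_j$, so $\mathcal D_S$ has full rank. Moreover $\alpha_j(\mathcal D_S)\le g_K(e_j)$, while $\alpha_j(\mathcal D_S)\ge g_K(e_j)$ holds trivially for any subset of the history. Hence $\mu_\infty(S)=\mu_\infty^{\mathrm{cont}}$. Together with $m_{\mathrm{obs}}=n$ from Section~\ref{sec:memory-budgets}, this gives the chain of inequalities.

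For sharpness I would reduce to a pure curve-design problem. Take $A\equiv0$, so that $\Phi\equiv I$ and $h_t(\tau)=C(t-\tau)$; then any real-analytic curve in $\mathbb R^n$ is realizable by an analytic $C(\cdot)$. The construction target is a symmetric polytope $P=\operatorname{conv}\{\pm p_{k}\}$ with $n^2$ vertices $p_{j,i}$, $1\le i\le n$, $1\le j\le n$, satisfying three conditions:
\begin{itemize}
\item[(i)] the ray through $e_j$ meets $\partial P$ in the relative interior of a simplicial facet $F_j=\operatorname{conv}\{p_{j,1},\dots,p_{j,n}\}$;
\item[(ii)] the $n$ facets $F_j$ have pairwise disjoint vertex sets;
\item[(iii)] $g_P(e_j)$ is the same for all $j$.
\end{itemize}
A natural choice is to place the vertices of $F_j$ as a small simplex centred on the $e_j$-axis and rescale so that (iii) holds.

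One then needs an analytic curve $h$ on $[0,T]$ that passes through the $n^2$ points $p_{j,i}$ at distinct times and otherwise stays in the interior of $P$ relative to those facets. For this I would take $h(\tau)=\psi(\tau)g(\tau)$, where $g$ is a polynomial interpolating the $p_{j,i}$, and $\psi$ is an analytic damping factor equal to $1$ at the nodes and strictly smaller elsewhere; one could also use a polynomial interpolant with a small correction term. Then $K=P$, and the face $F_j$ is exactly the simplex whose vertices are attained only at the $n$ prescribed times.

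Given this, the argument for $m_{\mathrm{sat}}=n^2$ is as follows. Any finite schedule reaching $g_K(e_j)$ must represent $e_j/g_K(e_j)$ using only points of $F_j$. Because $e_j$ lies in the relative interior of the simplex, all $n$ vertex times are required. Condition (iii) forces saturation of every coordinate simultaneously, since omitting one vertex time raises $\alpha_j$ above the common maximum. Condition (ii) then makes the $n^2$ required times distinct.

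The main obstacle is this last construction. The delicate part is making the analytic curve touch each facet only at the intended vertices, with no tangential contact elsewhere that would create alternative representations, while keeping all $n$ gauges exactly equal. I would first try verifying it for $n=2$ explicitly, then use a perturbation/compactness argument in general: strict interiority away from the nodes is an open condition, so a small analytic correction of $g$ preserves it.
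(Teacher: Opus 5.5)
\textbf{Where the proposal matches the paper.} Your upper bound is the paper's argument: the gauge of $e_j$ in $K$, an exposed face through $e_j/g_K(e_j)$, Carath\'eodory inside that face, and a union over $j$. The skeleton of your sharpness argument also matches: $A\equiv0$, $n$ simplicial faces centred on the axes at a common height with disjoint vertex sets, and all vertices of each face forced.

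\textbf{The gap: your curve construction cannot work.}
\begin{itemize}
\item Each prescribed point $v_a$ (your $p_{j,i}$) is a vertex of $P$. Hence $\operatorname{cone}(P-v_a)$ is closed and pointed.
\item Suppose a differentiable curve has $h(s_a)=v_a$ at an interior node and stays in $P$ on both sides of $s_a$. The one-sided difference quotients put both $h'(s_a)$ and $-h'(s_a)$ in that cone, so $h'(s_a)=0$.
\item For $h=\psi g$ with $\psi\le1$ and $\psi(s_a)=1$, you have $\psi'(s_a)=0$. Hence $h'(s_a)=g'(s_a)$, which is nonzero for a generic Lagrange interpolant.
\item So the curve leaves $P$ next to every interior node, and your claim $K=P$ is false. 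Conditions (i)--(iii), verified for $P$, then no longer describe $K$. For example, if $e_j^Tg'(s_a)\ne0$ at a vertex of $F_j$, the curve crosses the supporting hyperplane of $F_j$.
\item Your perturbation fallback does not help. The obstruction sits at the nodes, where containment in $P$ is not an open condition.
\end{itemize}

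\textbf{The missing idea: make containment automatic.} The paper does this as follows.
\begin{itemize}
\item Take vertices $v_{jk}=\mu e_j+\varepsilon w_{jk}$ with $w_{jk}\perp e_j$, $\|w_{jk}\|_\infty\le1$ and $0<\varepsilon<\mu$.
\item Define the analytic convex combination $h(\tau)=\sum_a\lambda_a(\tau)v_a$, where $\lambda_a=\pi_a/\sum_b\pi_b$ and $\pi_a(\tau)=\prod_{b\ne a}(\tau-s_b)^2$.
\item Then $h$ lies in $P$ by construction and $\lambda_a(s_b)=\delta_{ab}$. The squared factors supply the required zero velocity at the nodes, and every weight is positive off the nodes.
\item The $v_a$ outside $F_j$ have $|x_j|\le\varepsilon<\mu$. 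This gives $|e_j^Th(\tau)|<\mu$ away from the nodes, so the only signed curve points on $F_j$ are its $n$ prescribed vertices.
\end{itemize}

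Your multiplicative approach can also be repaired, but only with two additions. First, impose the Hermite conditions $g'(s_a)=0$. Second, choose the damping quantitatively, for instance $\psi(\tau)=\exp\bigl(-M\prod_b(\tau-s_b)^2\bigr)$ with $M$ large enough that the $P$-gauge of $\psi g$ is strictly below $1$ away from the nodes. This works because $g(\tau)-v_a=O((\tau-s_a)^2)$ and the gauge is Lipschitz.
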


Define the centrally symmetric temporal measurement hull
\begin{equation}
    Q_t
    :=
    \operatorname{conv}
    \{\pm h_t(\tau):\tau\in[0,T]\}
    \subset\mathbb R^n.
\end{equation}
Then
\begin{equation}
    \sup_{\tau\in[0,T]}|h_t(\tau)x|
    =h_{Q_t}(x),
\end{equation}
where $h_{Q_t}(x)=\sup_{q\in Q_t}q^Tx$ is the support function. Write $B_1^n=\{q:\|q\|_1\le1\}$ for the cross-polytope with vertices $\pm e_j$. The continuous margin is its largest scale contained in the measurement hull:
\begin{equation}
    \mu_{\infty}^{\mathrm{cont}}(t,T)
    =
    \max\left\{
    \rho:\ \rho B_1^n\subseteq Q_t
    \right\}.
    \label{eq:crosspolytope-inclusion}
\end{equation}
Write $\mu_c:=\mu_{\infty}^{\mathrm{cont}}(t,T)$. For each coordinate axis define the maximal radial point
\begin{equation}
    q_j:=\rho_j e_j,
    \qquad
    \rho_j:=\max\{\rho\ge0:\rho e_j\in Q_t\}.
\end{equation}
Because $\mu_c B_1^n\subseteq Q_t$, one has $\rho_j\ge\mu_c$. Each $q_j$ lies on the boundary of $Q_t$ and therefore belongs to an exposed face contained in an affine hyperplane of dimension at most $n-1$. Carath\'eodory reduction within that face represents $q_j$ by at most $n$ signed temporal atoms. Taking the union over $j=1,\ldots,n$ therefore uses at most $n^2$ actual temporal rows. The resulting symmetric sampled hull contains every $\pm q_j$ and hence, by convexity and $\rho_j\ge\mu_c$, every $\pm\mu_c e_j$. It therefore contains $\mu_c B_1^n$ and attains the continuous margin. This support mechanism is closely related to classical finite-atomic representation and Carath\'eodory--Richter--Tchakaloff ideas \cite{schmudgen2017moment}; the support-reduction principle itself is not a new convex-geometric fact. A complete proof is given in Appendix~\ref{app:history-compression}.

For sharpness, the construction builds $n$ exposed simplex faces of the temporal hull, one for each coordinate direction, with
\begin{equation}
    \mu_c e_j
\end{equation}
lying in the relative interior of the $j$th face. Each coordinate vertex then requires all $n$ temporal vertices of that face. The faces use disjoint temporal atom sets, forcing $n^2$ distinct actual sample times. An analytic curve can pass through all required vertices while avoiding the critical faces at every other time. Thus regularity alone does not reduce the sample requirement; a smaller bound must use additional system structure.

For $n=2$, the construction can be visualized using the four informative row vectors
\begin{equation}
    (c,s),\quad(c,-s),\quad(s,c),\quad(-s,c),
\end{equation}
where
\begin{equation}
    c=\cos\frac{\pi}{8},
    \qquad
    s=\sin\frac{\pi}{8}.
\end{equation}
Their signed convex hull is a regular octagon. The continuous margin is $c$, while $ce_1$ and $ce_2$ lie in the relative interiors of two different exposed edges. Both endpoints of both edges are necessary, giving
\begin{equation}
    m_{\mathrm{sat}}=4=n^2.
\end{equation}

A point of $Q_t$ need not be one realizable measurement: the admissible measurements are the atomic rows $h_t(\tau)$. Sharpness depends on how many such rows are needed to represent the boundary points.

Theorem~\ref{thm:n2-saturation} is pointwise in the current time: the saturating delay set may change with $t$. It does not give a fixed pattern of size $n^2$ that is optimal for all $t$.

%% file: sections/02_finite_memory.tex
\section{Time-invariant systems}
\label{sec:lti}

Time-invariant dynamics restrict the form of the output history. We show that strict total positivity permits the continuous-history accuracy to be attained with $n$ measurements, and then give two systems for which an additional measurement is necessary.

\subsection{The time-invariant case}
\label{sec:finite-memory}

For constant matrices in Section~\ref{sec:problem}, $\Phi(t-\tau,t)=e^{-A\tau}$ and
\begin{equation}
    q_i[u;t]=C\int_{t-\tau_i}^{t}e^{A(t-\tau_i-s)}Bu(s)\,ds,
    \qquad z_i(t)=y(t-\tau_i)+q_i[u;t].
\end{equation}
Consequently
\begin{equation}
    Z_\tau(t)=\mathcal D_\tau x(t)+V_\tau(t),
    \qquad
    \mathcal D_\tau=\operatorname{col}_i Ce^{-A\tau_i}.
    \label{eq:delay-information-model}
\end{equation}
The correction vanishes for an autonomous system. The matrix and its margin are now independent of current time. We use the notation
\begin{equation}
    \mu_\infty^-(\tau):=\mu_\infty(\mathcal D_\tau),\qquad
    \mu_{\infty,\mathrm{delay}}^{\mathrm{cont}}(T)
    :=\min_{\|x\|_\infty=1}\sup_{\tau\in[0,T]}|Ce^{-A\tau}x|.
    \label{eq:continuous-delay-margin}
\end{equation}
The superscript $-$ distinguishes delayed current-state reconstruction from a forward information family $Ce^{As}$. We write $\gamma_\infty(\tau)=1/\mu_\infty^-(\tau)$ and $\alpha_j^-(\tau)=\alpha_j(\mathcal D_\tau)$. The two measurement counts are also independent of $t$.

\paragraph{Time orientation.}

Reconstruction at the beginning and end of the window has different noise gains. Let
\begin{equation}
    s_i=T-\tau_i,
\end{equation}
and define
\begin{equation}
    \mathcal O_s
    :=
    \begin{bmatrix}
        Ce^{As_1}\\
        \vdots\\
        Ce^{As_m}
    \end{bmatrix}.
\end{equation}
After subtraction of the known forced response, $\mathcal O_s$ reconstructs $x(t-T)$. The current-state matrix is related to it by
\begin{equation}
    \boxed{\mathcal D_\tau=\mathcal O_s e^{-AT}.}
    \label{eq:orientation-relation}
\end{equation}
Consequently the two matrices have the same rank. If $L_s\mathcal O_s=I$, current-state prediction has measurement gain
\begin{equation}
    K_\tau=e^{AT}L_s.
\end{equation}
Thus exact observability is invariant under the change of time orientation, whereas the norm of the reconstruction operator generally is not.

For a forward information example on $[a,b]$, the pair $\bar A=-A$, $\bar C=Ce^{Aa}$ gives the same rows at delays $\tau\in[0,b-a]$. The four-state example below uses this realization. For the mechanical example, a norm-preserving time-reversal symmetry instead transfers the result to delays of the same plant.

%% file: sections/05_structured_lti.tex
\subsection{Saturation at \texorpdfstring{$n$}{n} under strict total positivity}
\label{sec:structured-lti}

Strict total positivity provides a sufficient condition for $n$ measurements to attain the continuous-history minimax accuracy.

Let
\begin{equation}
    G(\tau)=
    \begin{bmatrix}
        g_1(\tau)&\cdots&g_n(\tau)
    \end{bmatrix}
    =Ce^{-A\tau},
    \qquad \tau\in[a,b].
\end{equation}
Write $\alpha_j^{-,\mathrm{cont}}=\sup\{|x_j|:\sup_{\tau\in[a,b]}|G(\tau)x|\le1\}$ for the continuous coordinate recovery constants. Assume the ordered basis $(g_1,\ldots,g_n)$ is strictly totally positive in the collocation sense. Thus every ordered minor of every collocation matrix formed from increasing delays and increasing column indices is positive. We use standard facts from Chebyshev systems and total positivity; see, e.g., \cite{karlin1966tchebycheff,karlin1968totalpositivity,pinkus2010totallypositive}.

\begin{theorem}[Simultaneous coordinatewise saturation]
\label{thm:stp-saturation}
Suppose $G$ is strictly totally positive on $[a,b]$. Then one $n$-point schedule $\tau^\star$ attains every continuous coordinate recovery constant:
\begin{equation}
    \alpha_j^-(\tau^\star)=\alpha_j^{-,\mathrm{cont}},
    \qquad j=1,\ldots,n.
    \label{eq:simultaneous-coordinate-saturation}
\end{equation}
In particular,
\begin{equation}
    m_{\mathrm{sat}}=m_{\mathrm{obs}}=n.
\end{equation}
The same conclusion holds after a signed permutation of state coordinates. With positive diagonal state scalings, the corresponding statement holds in the induced weighted $\ell_\infty$ norm.
\end{theorem}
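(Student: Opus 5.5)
The plan is to exhibit one schedule $\tau^\star$, namely the alternation points of a generalized Chebyshev polynomial of the system $(g_1,\ldots,g_n)$, and to sandwich each coordinate constant between its discrete and continuous versions. Since a finite schedule imposes fewer constraints than the full window, $\alpha_j^{-,\mathrm{cont}}\le\alpha_j^-(\tau)$ for every $\tau$. It therefore suffices to find $\tau^\star$ and a single vector $x^\star$ with
\begin{equation}
\sup_{\tau\in[a,b]}|G(\tau)x^\star|\le1
\quad\text{and}\quad
|x^\star_j|=\alpha_j^-(\tau^\star)\ \text{for every } j.
\end{equation}
Then $\alpha_j^-(\tau^\star)\le\alpha_j^{-,\mathrm{cont}}$, and equality follows for all $j$ simultaneously.

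\emph{Step 1 (existence of the Chebyshev polynomial).} Strict total positivity makes every subfamily $(g_1,\ldots,g_k)$ a Chebyshev (Haar) system on $[a,b]$. Take $p$ to be the best uniform approximation of $g_n$ from $\operatorname{span}\{g_1,\ldots,g_{n-1}\}$. By the Haar alternation theorem, $T:=(g_n-p)/\|g_n-p\|_\infty$ satisfies $\|T\|_\infty=1$ and equioscillates at $n$ points $a\le\tau^\star_1<\cdots<\tau^\star_n\le b$:
\begin{equation}
T(\tau^\star_i)=\varepsilon(-1)^{n-i},\qquad \varepsilon\in\{\pm1\}.
\end{equation}
The approximation error is nonzero because $g_n\notin\operatorname{span}\{g_1,\ldots,g_{n-1}\}$. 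Writing $T=Gx^\star$ gives $\sup_{[a,b]}|Gx^\star|\le1$.

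\emph{Step 2 (discrete optimum via sign pattern).} Let $D=\mathcal D_{\tau^\star}$, the $n\times n$ collocation matrix. It is strictly totally positive, hence invertible. For $|Dx|_\infty\le1$ write $x=D^{-1}w$ with $\|w\|_\infty\le1$, so that
\begin{equation}
\alpha_j^-(\tau^\star)=\max_{\|w\|_\infty\le1}(D^{-1}w)_j=\sum_i|(D^{-1})_{ji}|.
\end{equation}
The maximum is attained at $w_i=\operatorname{sign}(D^{-1})_{ji}$. By Cramer's rule, $(D^{-1})_{ji}=(-1)^{i+j}\det D^{(i,j)}/\det D$, and all minors are strictly positive. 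Hence $D^{-1}$ has a strict checkerboard sign pattern, and the optimal $w$ is $\pm((-1)^{i})_i$, the same alternating vector up to a global sign for every $j$. That vector is exactly $\pm D x^\star$ from Step 1. Thus $|x^\star_j|=\alpha_j^-(\tau^\star)$ for all $j$, which together with the sandwich gives \eqref{eq:simultaneous-coordinate-saturation}.

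\emph{Step 3 (consequences).} Taking the maximum over $j$ and using \eqref{eq:alpha-mu-equivalence} in Proposition~\ref{prop:minimax-left-inverse}, together with its continuous analogue (same duality, with the sup over $[a,b]$), gives $\mu_\infty^-(\tau^\star)=\mu^{\mathrm{cont}}_{\infty,\mathrm{delay}}$. Hence $m_{\mathrm{sat}}\le n=m_{\mathrm{obs}}$. For the invariance claims, a signed permutation or positive diagonal scaling $x=D_x\tilde x$ transforms both the discrete and continuous problems identically. The scaling replaces $\ell_\infty$ by the weighted norm, and the permutation or sign change merely relabels or negates coordinates. The argument above is therefore applied to the transformed STP basis.

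The main obstacle is Step 1. One must make sure that strict total positivity delivers the Haar property on the closed interval, including endpoints, so that the alternation theorem yields exactly $n$ alternation points and $g_n$ is not in the span of the others. If needed, I would instead cite the Karlin--Studden construction of the Chebyshev polynomial for a Chebyshev system directly. The strictness of the checkerboard pattern in Step 2 is what makes the maximizing $w$ unique and identical across coordinates; this is where strict, rather than weak, total positivity is used.
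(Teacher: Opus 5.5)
Your proposal is correct and follows essentially the paper's argument. You take the $n$ alternation points of a best-approximation residual from the Haar space obtained by deleting one basis function, use the checkerboard sign pattern of the inverse collocation matrix to align the rows with the alternating contact signs, and close every coordinate at once with the sandwich $\alpha_j^-(\tau^\star)=|x^\star_j|\le\alpha_j^{-,\mathrm{cont}}\le\alpha_j^-(\tau^\star)$.

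The only difference is that you fix the deleted function to be $g_n$, whereas the paper selects the critical coordinate $j_\star$ and first proves $\alpha_j^{-,\mathrm{cont}}=1/\rho_j$. This is a harmless simplification, because the paper's chain of inequalities never actually uses criticality. Also, your Step 2 only needs a nonnegative checkerboard pattern; it does not need strictness or uniqueness of the maximizing $w$.
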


\begin{proof}
Fix a coordinate $j$ and consider
\begin{equation}
    \rho_j
    :=
    \min_{c_k,\,k\ne j}
    \left\|
        g_j+
        \sum_{k\ne j}c_k g_k
    \right\|_\infty.
\end{equation}
Deleting one basis function from a strictly totally positive basis leaves a Haar space of dimension $n-1$. The Chebyshev alternation theorem therefore gives a unique minimax residual
\begin{equation}
    r_j(\tau)=G(\tau)z^{(j)},
    \qquad z_j^{(j)}=1,
\end{equation}
with $n$ alternating contacts
\begin{equation}
    r_j(t_i)
    =
    \sigma(-1)^{i-1}\rho_j,
    \qquad i=1,\ldots,n.
\end{equation}
After normalization by $\rho_j$, this residual is dual feasible and yields
\begin{equation}
    \alpha_j^{\mathrm{cont}}
    =
    \frac{1}{\rho_j}.
\end{equation}
Choose a critical coordinate $j_\star$ maximizing $\alpha_j^{\mathrm{cont}}$, and form the square collocation matrix $O$ at its alternating contacts. Strict total positivity implies the checkerboard sign pattern
\begin{equation}
    (-1)^{i+j}(O^{-1})_{ji}>0.
\end{equation}
Let $s_i=\sigma(-1)^{i-1}$. Since the normalized critical dual vector satisfies $O\widehat z_\star=s$, the checkerboard signs align with $s$ and give
\begin{equation}
    \|\operatorname{row}_k(O^{-1})\|_1
    =
    |(\widehat z_\star)_k|
    \le
    \alpha_k^{\mathrm{cont}}
    \le
    \alpha_\star.
\end{equation}
Restriction to finitely many times also gives $\alpha_k^{\mathrm{cont}}\le\|\operatorname{row}_k(O^{-1})\|_1$. Thus equality holds for every coordinate, proving \eqref{eq:simultaneous-coordinate-saturation}. Hence the $n$ contact delays attain the continuous minimax margin. Since scalar output requires at least $n$ samples for full column rank, $m_{\mathrm{sat}}=m_{\mathrm{obs}}=n$.
\end{proof}

Strict total positivity supplies the alternating contacts through classical best uniform approximation. Together with the checkerboard inverse signs, these contacts saturate the bounded-error reconstruction criterion for all state coordinates.

\paragraph{Real modal systems.}

For $A=\operatorname{diag}(a_1,\ldots,a_n)$ with distinct real $a_j$ and nonzero output coefficients $c_j$, the delayed basis $g_j(\tau)=c_j e^{-a_j\tau}$ inherits the total positivity of the exponential kernel \cite{karlin1968totalpositivity,pinkus2010totallypositive}. After a signed permutation of coordinates, Theorem~\ref{thm:stp-saturation} applies; the RC chain below illustrates the same measurement reduction directly in physical coordinates.

\paragraph{Endpoint measurements in a positive chain.}
The conclusion also holds in the physical coordinates of a nearest-neighbour RC or thermal chain; diagonalization of the plant is unnecessary.

\begin{corollary}[Endpoint-observed tridiagonal systems]
\label{cor:endpoint-chain}
Let $A$ be tridiagonal with $a_{i,i+1}>0$ and $a_{i+1,i}>0$, and let $C=e_1^T$ or $C=e_n^T$. For every $T>0$, an $n$-delay schedule on $[0,T]$ satisfies \eqref{eq:simultaneous-coordinate-saturation}. Consequently $m_{\mathrm{obs}}=m_{\mathrm{sat}}=n$. No restriction on the diagonal entries is required.
\end{corollary}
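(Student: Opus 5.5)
The plan is to reduce the corollary to Theorem~\ref{thm:stp-saturation}, or more precisely to the two ingredients its proof actually uses: (i) after deleting any single basis function, the remaining $n-1$ functions span a Haar space on $[0,T]$; (ii) every square collocation matrix $O$ of the full basis at increasing delays has inverse entries with checkerboard signs $(-1)^{i+j}(O^{-1})_{ji}>0$. By Cramer's rule, (ii) follows once all $n\times n$ collocation minors and all $(n-1)\times(n-1)$ minors omitting one column and one row are positive. So full strict total positivity is more than we need.

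\textbf{Step 1: choose signed coordinates.} Take $C=e_1^T$; the case $C=e_n^T$ follows by reversing the index order, which preserves the tridiagonal sign structure. Let $S=\operatorname{diag}((-1)^{j-1})$. Then
\begin{equation}
  G(\tau)S=e_1^T e^{-A\tau}S=e_1^T S\,e^{-SAS\tau}=e_1^T e^{B\tau},\qquad B:=-SAS .
\end{equation}
Here $B$ is tridiagonal with off-diagonal entries $a_{i,i+1},a_{i+1,i}>0$ and arbitrary diagonal. Because Theorem~\ref{thm:stp-saturation} allows signed permutations of the state coordinates, it suffices to work with $f_j(\tau):=(e^{B\tau})_{1j}$. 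A constant shift of the diagonal multiplies every $f_j$ by the positive factor $e^{c\tau}$, which changes no sign of any minor. This is why no restriction on the diagonal entries is needed.

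\textbf{Step 2: spectral representation.} A positive diagonal similarity symmetrizes $B$ into a Jacobi matrix, again with positive scaling, so $B$ has real simple eigenvalues $x_1<\dots<x_n$. The associated orthogonal polynomials $p_0=1,p_1,\dots,p_{n-1}$ come from the three-term recurrence and have positive leading coefficients $\kappa_k$, because the off-diagonals are positive. In the Karlin--McGregor form one has
\begin{equation}
  f_j(\tau)=\pi_j\sum_{k=1}^n w_k\, e^{x_k\tau}\,p_{j-1}(x_k),\qquad \pi_j>0,\ w_k>0 .
\end{equation}
By Cauchy--Binet, a collocation minor with delays $\tau_1<\dots<\tau_r$ and columns $J$ is a positive combination of products of two factors. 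The first is $\det[e^{x_{k_l}\tau_i}]$, which is positive for increasing $k$ since the exponential kernel is strictly totally positive. The second is $\det[p_{j-1}(x_{k_l})]_{l,\,j\in J}$. For $r=n$ the second factor is $\kappa_0\cdots\kappa_{n-1}$ times a Vandermonde determinant, hence positive. This gives the $n\times n$ minors and also shows that the full span is Haar.

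\textbf{Step 3 (main obstacle): the column sets missing one index.} We need the sign of $\det[p_{j-1}(y_l)]_{j\ne j_0}$ at increasing points $y_1<\dots<y_{n-1}$ drawn from the spectrum. The first attempt will be to write this determinant as the Vandermonde of the $y_l$ times a symmetric polynomial. That polynomial is, up to the positive factor $\prod\kappa$, the coefficient of $p_{j_0-1}$ when the monic polynomial $\prod_l(x-y_l)$ is expanded in the orthogonal basis.

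The plan is to show this coefficient has constant sign whenever the $y_l$ are $n-1$ distinct eigenvalues of $B$. To do this, use the discrete orthogonality with respect to $w$, so that the coefficient equals $\sum_k w_k\, p_{j_0-1}(x_k)\prod_l(x_k-y_l)/\|p_{j_0-1}\|^2$. Only the single omitted eigenvalue $x_{k_0}$ contributes, which leaves $w_{k_0}p_{j_0-1}(x_{k_0})\prod_l(x_{k_0}-y_l)$. Its sign is $(-1)^{\#\{l:y_l>x_{k_0}\}}\operatorname{sgn}p_{j_0-1}(x_{k_0})$. This is not uniform, so the crude bound fails.

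The next attempt is to keep track of the exponential factor in the Cauchy--Binet sum as well, and pair the sign $(-1)^{n-k_0}$ against the sign of the $(n-1)$-point exponential minor with its cofactor. This is exactly what Cramer's formula for $(O^{-1})_{ji}$ requires, because the relevant quantity there is the cofactor ratio rather than each minor separately. The fallback is to prove the checkerboard property directly from the adjugate of the full $n\times n$ factorization $O=E\,\operatorname{diag}(w)\,P\,\operatorname{diag}(\pi)$, where $E=[e^{x_k\tau_i}]$ and $P=[p_{j-1}(x_k)]$. Then
\begin{equation}
  O^{-1}=\operatorname{diag}(\pi)^{-1}P^{-1}\operatorname{diag}(w)^{-1}E^{-1}.
\end{equation}
Here $E^{-1}$ is checkerboard because $E$ is strictly totally positive. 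For $P$, discrete orthogonality gives $P^{-1}=\operatorname{diag}(\|p\|^{-2})P^T\operatorname{diag}(w)$, and the oscillation property of the $p_j$ (sign changes and interlacing of zeros) will be used to control its signs.

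\textbf{Step 4: conclude.} With (i) and (ii) in hand, Step 3 is obtained by applying the previous step's computation to $n-1$ points, and the proof of Theorem~\ref{thm:stp-saturation} then goes through verbatim on $[0,T]$. It yields one $n$-delay schedule attaining every $\alpha_j^{-,\mathrm{cont}}$, hence $m_{\mathrm{sat}}=m_{\mathrm{obs}}=n$. Undoing the signature $S$ returns the statement to the physical coordinates.
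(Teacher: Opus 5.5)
\textbf{There is a genuine gap: your Step 3 never establishes the sign of the minors that omit one column, and ingredient (i) is false at $\tau=0$.}

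Your Steps 1 and 2 are sound. The signature change is the one the paper uses, and the spectral Cauchy--Binet argument correctly shows that all $n\times n$ collocation minors are positive, even with a zero delay. The proof, however, depends on the $(n-1)\times(n-1)$ minors that omit one column. You need them both for the Haar property of the approximating space and, through Cramer's rule, for the checkerboard signs of $O^{-1}$. Step 3 never determines their sign.

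Your own computation shows why the spectral route stalls. The Cauchy--Binet terms $\det E[I,\hat k_0]\det P[\hat k_0,\hat j_0]$ have signs that depend on $k_0$, so any positivity must come from cancellation. The fallback does not supply that cancellation. The matrix $P^{-1}=\operatorname{diag}(\|p\|^{-2})P^T\operatorname{diag}(w)$ has entries of sign $\operatorname{sgn}p_{j-1}(x_k)$, and this is not a checkerboard pattern: its first row is entirely positive because $p_0\equiv1$. Hence the signs of $O^{-1}=\operatorname{diag}(\pi)^{-1}P^{-1}\operatorname{diag}(w)^{-1}E^{-1}$ cannot be read off the factors, and the oscillation or interlacing properties of the $p_j$ do not change that. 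Step 4 then appeals to exactly the $(n-1)$-point computation that had just failed, so it is circular.

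The missing ingredient is total positivity of the matrix $e^{B\tau}$ itself. For tridiagonal $B$ with positive off-diagonals, all its minors are positive for $\tau>0$. The endpoint Krylov identity transfers this to the output row. Let $R_k=\operatorname{col}_{r<k}e_1^TB^r$; its only nonzero maximal minor is the leading one. Cauchy--Binet then gives $W_J(\tau)=\prod_{r=1}^{k-1}b_{r,r+1}^{k-r}\det(e^{B\tau}[1{:}k,J])>0$ for every ordered column set $J$. The ECT Wronskian criterion then yields positivity of every ordered collocation minor at positive delays, including the minors that omit a column.

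There is a second problem with running the proof of Theorem~\ref{thm:stp-saturation} verbatim on $[0,T]$. Since $h_j(0)=\delta_{1j}$, the functions $h_2,\ldots,h_n$ all vanish at $\tau=0$. Their span is therefore not Haar on $[0,T]$, so your ingredient (i) is false for $j=1$. Likewise (ii) holds only weakly when a contact lies at zero. If the critical coordinate happens to be $1$, the alternation step is not available.

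The paper avoids this by approximating $h_n$ from $\operatorname{span}\{h_1,\ldots,h_{n-1}\}$. That space is Haar on all of $[0,T]$, because the leading Wronskians equal $\prod b_{r,r+1}^{k-r}>0$ even at $\tau=0$. The resulting alternating dual vector $z$ need not belong to a critical coordinate. Once $O^{-1}$ has the weak checkerboard pattern, $|z_j|=\|\operatorname{row}_j(O^{-1})\|_1$ holds for every $j$. Combined with $|z_j|\le\alpha_j^{\mathrm{cont}}\le\alpha_j(\tau^\star)$, this gives the simultaneous equalities.
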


The proof in Appendix~\ref{app:endpoint-chain} uses the classical total positivity of tridiagonal cooperative flows \cite{margaliot2019tpds}, an endpoint Krylov identity, and alternation. It includes the possible contact at zero delay, where some minors vanish. For delayed current-state reconstruction the sign change $P=\operatorname{diag}(1,-1,1,\ldots)$ makes $-PAP$ cooperative and preserves the state norm. Thus the conclusion concerns the delayed rows of the same plant.

For an RC chain with node capacitances $c_i>0$ and neighbouring conductances $k_i>0$, the couplings are $a_{i,i+1}=k_i/c_i$ and $a_{i+1,i}=k_i/c_{i+1}$. The corollary therefore applies directly to endpoint sensing. The stronger coordinatewise equality means that the same schedule is optimal for every positive diagonal weighting of the state error.

%% file: sections/06_counterexample.tex
\subsection{Strict gaps beyond total positivity}

Interior sensing can require more than $n$ samples even in a cooperative chain with distinct real eigenvalues. We first give a three-state family, then a nilpotent example and a mechanical system. Their proofs are in Appendix~\ref{app:lti-gaps}.

\paragraph{A three-state family with an interior measurement.}
Let $\omega,w>0$ and consider
\begin{equation}
    A_w=\omega\begin{bmatrix}
    -1/2&1/2&0\\
    3/4&-1&1/(4w)\\
    0&3w/2&-3/2
    \end{bmatrix},
    \qquad C=e_2^T.
    \label{eq:interior-chain-model}
\end{equation}
Its eigenvalues are $0,-\omega,-2\omega$, and the middle output is observable. Set $R=e^{\omega T}$ and define
\begin{equation}
\begin{aligned}
    a(R)&=\frac{3R^2+42R+19}{3(R-1)^2},\\
    c(R)&=\frac{R^2-6R+21}{R^2-6R-3},
    &b(R)&=\frac{R^2+2R+29}{(R-1)^2}.
\end{aligned}
\end{equation}

\begin{theorem}[A positive three-state family requiring four samples]
\label{thm:interior-gap}
If $R>9$ and
\begin{equation}
    \frac{a(R)}{b(R)}<w\le\frac{a(R)}{c(R)},
    \label{eq:interior-gap-range}
\end{equation}
then the delayed current-state problem for \eqref{eq:interior-chain-model} satisfies
\begin{equation}
    m_{\mathrm{obs}}=3,\qquad m_{\mathrm{sat}}=4,
    \qquad\gamma_\infty^{\mathrm{cont}}=a(R).
\end{equation}
A saturating delay set is
\begin{equation}
    \left\{0,\ \frac1\omega\log\frac{3(R+1)}{R-3},\
    \frac1\omega\log\frac{R+1}{2},\ T\right\}.
\end{equation}
The parameter interval in \eqref{eq:interior-gap-range} is nonempty for every $R>9$.
\end{theorem}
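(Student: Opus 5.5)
The plan is to reduce everything to an extremal problem for real quadratic polynomials on an interval. I will use the spectrum $\{0,-\omega,-2\omega\}$ of $A_w$ and the substitution $u=e^{\omega\tau}\in[1,R]$. With these, each delayed row $e_2^Te^{-A_w\tau}$ becomes a vector quadratic $h(u)=v_0+v_1u+v_2u^2$, where $v_0,v_1,v_2$ come from the left and right eigenvectors of $A_w$. So $x\mapsto p_x(u)=h(u)x$ is an invertible linear map $M$ from $\mathbb R^3$ onto the polynomials of degree at most two; invertibility is the statement $m_{\mathrm{obs}}=3$. By the coordinate duality \eqref{eq:coordinate-duality}, each continuous constant
\begin{equation}
\alpha_j^{\mathrm{cont}}=\max\{x_j:\ \max_{u\in[1,R]}|p_x(u)|\le1\}
\end{equation}
is the norm of a linear functional $\ell_j$ on quadratics under the sup norm on $[1,R]$. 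Its dual is a signed measure of minimal total variation representing $\ell_j$. By Carath\'eodory in dimension three, an optimal dual measure is supported on at most three points of $[1,R]$.

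\textbf{Step 1: continuous constants.} For each $j$ I would guess an extremal quadratic $p^{(j)}$ with $|p^{(j)}|\le1$ on $[1,R]$: a Chebyshev-type polynomial touching $\pm1$ at the endpoints and possibly at one interior critical point. I would pair it with a dual measure supported on the contact set, with signs matching $p^{(j)}$ there. Equality of primal and dual values then certifies optimality. I expect the critical coordinate to have value $a(R)$, with contacts at $u=1$, $u=R$ and one interior point among $3(R+1)/(R-3)$ and $(R+1)/2$. The third coordinate carries the factor $w$ through the entries $1/(4w)$ and $3w/2$, so its constant scales with $w$. I would use the upper bound $w\le a(R)/c(R)$ to show that this coordinate stays at or below $a(R)$, which gives $\gamma_\infty^{\mathrm{cont}}=a(R)$.

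\textbf{Step 2: the upper bound $m_{\mathrm{sat}}\le4$.} On the four delays of the statement, i.e. $u\in\{1,3(R+1)/(R-3),(R+1)/2,R\}$, I would exhibit an explicit left inverse $L$ with $L\mathcal D_\tau=I$ and every row $\ell_1$-norm at most $a(R)$. Its rows are the dual measures from Step 1, padded with zeros, possibly with one row adjusted inside the four-point support. Restriction to finitely many rows only lowers the margin, so equality follows.

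\textbf{Step 3: the lower bound $m_{\mathrm{sat}}>3$.} I expect this to be the main obstacle. A three-point schedule $\{u_1,u_2,u_3\}$ gives a square matrix, so its gain is the maximal row $\ell_1$-norm of the inverse, a rational function of $(u_1,u_2,u_3)$. Saturation requires the critical coordinate's constant to be attained, and that is only possible on a support of an optimal dual measure. I would first prove that this optimal measure is unique, by showing that any optimal measure must sit on the contact set of the unique extremal polynomial. The triple is then forced to be that contact set, say $\{1,u^\ast,R\}$. Next I would compute the $\ell_1$ row norm of the inverse for the remaining coordinate on this triple, expecting it to equal $w\,b(R)$ up to a normalization. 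The lower bound $w>a(R)/b(R)$ then makes this exceed $a(R)$, so that triple fails. If instead two coordinates are critical with distinct three-point supports, the same uniqueness argument shows that no three points contain both supports. Either way no three delays saturate, and $m_{\mathrm{sat}}=4$.

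\textbf{Step 4: the side conditions.} I would check the positivity assumptions, which are $R>9$, $R^2-6R-3>0$ so that $c(R)>0$, and $3(R+1)/(R-3)\in(1,R)$. I would also check that the chosen delays are distinct and that the extremal polynomials really are bounded by one on $[1,R]$; all of these reduce to polynomial sign checks in $R$. Nonemptiness of \eqref{eq:interior-gap-range} is equivalent to $c(R)<b(R)$. Clearing the positive denominators reduces this to one polynomial inequality in $R$, which I would verify holds for all $R>9$.
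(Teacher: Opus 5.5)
Your plan is correct and matches the paper's proof essentially step for step. Like you, the paper passes to quadratics in $u=e^{\omega\tau}$ and certifies the critical first coordinate with the Chebyshev quadratic, whose contacts $1,(R+1)/2,R$ carry a unique, fully supported optimal representation of norm $a(R)$ (making those nodes mandatory); it then shows the third-coordinate interpolation row on them has norm $wb(R)>a(R)$, and adds the node $s=3(R+1)/(R-3)$ to get a two-point third-coordinate representation with $R$ of norm $wc(R)\le a(R)$. The one adjustment to your template is that this third-coordinate extremal $-1+2\bigl((u-s)/(R-s)\bigr)^2$ touches $\pm1$ only at the interior vertex $s$ and at $R$ (not at $u=1$), and its bound on $[1,R]$ uses $s<(R+1)/2$, which is equivalent to $R>9$.
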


After a norm-preserving column sign change and the variable $u=e^{\omega\tau}$, the information row is quadratic in $u$. Its first coordinate has a dual optimum with precisely three contacts, forcing the nodes $1,(R+1)/2,R$ into every saturating schedule. At these three nodes the third reconstruction row has norm $wb(R)>a(R)$, although its continuous optimum is $wc(R)\le a(R)$. A fourth node attains that optimum without changing the first coordinate's support. Appendix~\ref{app:interior-gap} proves these statements exactly.

For example, $R=16$ and $w=7/4$ give
\begin{equation}
    \gamma_\infty^{\mathrm{cont}}=\frac{1459}{675},
    \qquad
    \alpha_3^{-,\mathrm{cont}}=\frac{1267}{628},
    \qquad
    \alpha_3^-(\tau_{\mathrm{mandatory}})=\frac{2219}{900}
    >\frac{1459}{675}.
\end{equation}
The matrix $A_w$ is a positive diagonal similarity of $A_1$, a conservative RC-chain matrix. Varying $w$ therefore changes the coordinate scale relevant to the prescribed state-error norm while leaving the eigenvalues unchanged. Positivity and a real modal spectrum alone do not give the coordinatewise conclusion of Corollary~\ref{cor:endpoint-chain}.

\begin{corollary}[A strict gap in Hurwitz cooperative systems]
\label{cor:hurwitz-interior-gap}
For every parameter choice in Theorem~\ref{thm:interior-gap}, there exists $\varepsilon_0>0$ such that replacing $A_w$ by $A_w-\varepsilon I$, with $0<\varepsilon<\varepsilon_0$, gives a Hurwitz tridiagonal cooperative system with $m_{\mathrm{obs}}=3$ and $m_{\mathrm{sat}}>3$ on the same delay interval.
\end{corollary}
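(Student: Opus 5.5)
The plan is a perturbation argument around the unperturbed family of Theorem~\ref{thm:interior-gap}. Fix $R>9$, $w$ in the range \eqref{eq:interior-gap-range}, and $T$. First the easy structural facts. Replacing $A_w$ by $A_w-\varepsilon I$ shifts the spectrum to $-\varepsilon,-\omega-\varepsilon,-2\omega-\varepsilon$, so the matrix is Hurwitz for every $\varepsilon>0$. It is still tridiagonal with the same positive off-diagonal entries, hence cooperative. Observability with $C=e_2^T$ is unchanged because $e^{-(A_w-\varepsilon I)\tau}=e^{\varepsilon\tau}e^{-A_w\tau}$. Hence $m_{\mathrm{obs}}=3$.

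The key observation is that the perturbation acts on the delayed rows only through a positive scalar weight:
\begin{equation*}
G_\varepsilon(\tau)=Ce^{-(A_w-\varepsilon I)\tau}=e^{\varepsilon\tau}G_0(\tau),\qquad \tau\in[0,T].
\end{equation*}
So every finite sampled matrix is $\mathcal D^{\varepsilon}_\tau=W_\varepsilon\mathcal D^0_\tau$ with $W_\varepsilon=\operatorname{diag}(e^{\varepsilon\tau_i})$, and $1\le e^{\varepsilon\tau_i}\le e^{\varepsilon T}$.

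The central step is continuity, uniformly over schedules, of both the continuous margin and the best three-point margin. For any $x$ we have
\begin{equation*}
\sup_{\tau\in[0,T]}|G_0(\tau)x|\le\sup_{\tau\in[0,T]}|G_\varepsilon(\tau)x|\le e^{\varepsilon T}\sup_{\tau\in[0,T]}|G_0(\tau)x|,
\end{equation*}
which gives $\mu^{\mathrm{cont}}_0\le\mu^{\mathrm{cont}}_\varepsilon\le e^{\varepsilon T}\mu^{\mathrm{cont}}_0$. The same bound holds for every finite schedule. Now define
\begin{equation*}
M_3(\varepsilon):=\sup_{\tau\in[0,T]^3}\mu_\infty(\mathcal D^\varepsilon_\tau).
\end{equation*}
Then $M_3(0)\le M_3(\varepsilon)\le e^{\varepsilon T}M_3(0)$. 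The perturbed system has $m_{\mathrm{sat}}=3$ only if some three-point schedule has margin equal to $\mu^{\mathrm{cont}}_\varepsilon$, which requires $M_3(\varepsilon)\ge\mu^{\mathrm{cont}}_\varepsilon$. That would need $e^{\varepsilon T}M_3(0)\ge\mu^{\mathrm{cont}}_0$.

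The main obstacle is to show that the unperturbed gap is \emph{strict}, namely $M_3(0)<\mu^{\mathrm{cont}}_0=1/a(R)$. Theorem~\ref{thm:interior-gap} only says that no three-point schedule attains the margin, and the supremum over schedules might still equal it. I would handle this by compactness. The map $\tau\mapsto\mu_\infty(\mathcal D^0_\tau)$ is continuous on the compact set $[0,T]^3$, since $\mu_\infty(D)=\min_{\|x\|_\infty=1}\|Dx\|_\infty$ is continuous in $D$. Schedules with coincident delays are allowed in the closure, but they have rank $<3$ and hence margin $0$. The supremum is therefore attained at some admissible three-point schedule. By $m_{\mathrm{sat}}=4$, the attained value is strictly below $\mu^{\mathrm{cont}}_0$. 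Coincident delays cannot produce the maximum because their margin is zero while $\mu^{\mathrm{cont}}_0>0$.

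Then take
\begin{equation*}
\varepsilon_0=\frac1T\log\frac{\mu^{\mathrm{cont}}_0}{M_3(0)}>0.
\end{equation*}
For $0<\varepsilon<\varepsilon_0$ we get $M_3(\varepsilon)<\mu^{\mathrm{cont}}_0\le\mu^{\mathrm{cont}}_\varepsilon$, so $m_{\mathrm{sat}}>3$ on the same interval $[0,T]$.
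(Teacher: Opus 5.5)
Your argument is correct and follows the paper's proof. Compactness of the delay cube $[0,T]^3$ turns the non-attainment in Theorem~\ref{thm:interior-gap} into a strict gap $\mu_c-\mu_3^\star>0$, and the factorization $h_\varepsilon(\tau)=e^{\varepsilon\tau}h_0(\tau)$ controls both the continuous and the three-sample margins. The only difference is that you use the multiplicative sandwich $1\le e^{\varepsilon\tau}\le e^{\varepsilon T}$ instead of the paper's additive one-Lipschitz bound $\Delta-2(e^{\varepsilon T}-1)M$, which gives you monotonicity of the continuous margin and the explicit threshold $\varepsilon_0=T^{-1}\log\bigl(\mu_0^{\mathrm{cont}}/M_3(0)\bigr)$.
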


The proof uses continuity of the optimized three-sample margin and of the continuous-history margin. This persistence result establishes a strict gap; it does not determine the exact saturation count of the damped family.

\paragraph{An exact four-state gap.}
\label{sec:exact-counterexample}

The extended complete Chebyshev (ECT) property of the output space alone does not imply $n$-sample saturation. Consider
\begin{equation}
    H(t)=\begin{bmatrix}1&t&t^2&t^3\end{bmatrix}M,
    \qquad t\in[-1,1],
    \qquad
    M=\begin{bmatrix}
    -3&-1&1&-2\\
    3&-3&0&1\\
    3&0&-2&1\\
    2&2&-1&-1
    \end{bmatrix}.
    \label{eq:polynomial-information-row}
\end{equation}
Since $\det M=40$, its scalar trajectory space is $\mathcal P_3$. Let $J$ be the nilpotent Jordan block with $J_{i,i+1}=1$, and set
\begin{equation}
    S=\operatorname{diag}(1,1,2,6)M,
    \qquad A=S^{-1}JS,
    \qquad C=e_1^TS.
\end{equation}
Then $Ce^{At}=H(t)$. The causal delay realization
\begin{equation}
    \bar A=-A,\qquad \bar C=Ce^{-A}
\end{equation}
satisfies
\begin{equation}
    \bar C e^{-\bar A\tau}=H(\tau-1),
    \qquad \tau\in[0,2].
    \label{eq:counterexample-delay-shift}
\end{equation}

\begin{proposition}[Exact $4\to5$ gap]
\label{prop:four-five-gap}
For this four-state scalar-output LTI system,
\begin{equation}
    m_{\mathrm{obs}}=4,
    \qquad m_{\mathrm{sat}}=5,
    \qquad
    \frac{1}{\mu_{\infty,\mathrm{delay}}^{\mathrm{cont}}(2)}
    =\frac{14101}{11400}.
\end{equation}
A saturating delay schedule is
\begin{equation}
    \tau_5=\left(0,\frac25,\frac{19}{17},\frac85,2\right).
    \label{eq:counterexample-five-delay}
\end{equation}
\end{proposition}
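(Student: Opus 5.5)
By \eqref{eq:counterexample-delay-shift}, the delayed rows are exactly $H(t)$ with $t=\tau-1\in[-1,1]$. Every statement therefore becomes a statement about the polynomial row curve $H(t)=[1,t,t^2,t^3]M$ on $[-1,1]$. Since $\det M\ne0$, the curve is observable, and $m_{\mathrm{obs}}=4$ follows from Section~\ref{sec:memory-budgets}.

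The plan is to work coordinatewise through the duality \eqref{eq:coordinate-duality}. For each $j$, compute the continuous constant
\begin{equation*}
\alpha_j^{\mathrm{cont}}=\max\Bigl\{x_j:\ \max_{t\in[-1,1]}|H(t)x|\le1\Bigr\}.
\end{equation*}
Writing $p(t)=H(t)x$, this equals $\max\{(M^{-1}\text{-coefficient functional})_j(p):\|p\|_{C[-1,1]}\le1,\ p\in\mathcal P_3\}$. Each $x_j$ is a fixed linear functional $\ell_j$ on $\mathcal P_3$ (coefficients through $M^{-1}$). So $\alpha_j^{\mathrm{cont}}$ is the norm of $\ell_j$ on $(\mathcal P_3,\|\cdot\|_\infty)$, and by Carath\'eodory and duality it is attained by a signed measure supported on at most $4$ points.

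For each $j$, I would find the extremal polynomial $p_j$ and its contact set explicitly. Then I would certify optimality by exhibiting the dual representation $\ell_j(p)=\sum_i a_i p(t_i)$ with $\sum|a_i|=\ell_j(p_j)$ and $\operatorname{sign}a_i=\operatorname{sign}p_j(t_i)$. The largest value, $\max_j\alpha_j^{\mathrm{cont}}$, should be $14101/11400$, attained by some critical coordinate. Both primal and dual are rational, so the certificate reduces to exact arithmetic plus checking $|p_j|\le1$ on $[-1,1]$. That check is a sign condition on quartic-free cubics: the critical points come from a quadratic and can be verified exactly.

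For the upper bound $m_{\mathrm{sat}}\le5$, take the schedule \eqref{eq:counterexample-five-delay}, i.e.\ $t\in\{-1,-3/5,2/17,3/5,1\}$. First form the $5\times4$ matrix $D$. Then, for every $j$, exhibit $a^{(j)}$ with $D^Ta^{(j)}=e_j$ and $\|a^{(j)}\|_1\le 14101/11400$. Combined with the trivial inequality $\mu_\infty(D)\le\mu_c$, this gives saturation. I expect the critical coordinate's dual support to be a subset of these nodes, with the remaining coordinates also fitting inside the same five nodes.

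The main obstacle is the lower bound $m_{\mathrm{sat}}>4$: no four delays achieve margin $\mu_c$. My approach is to argue through contact sets. Suppose a $4$-point set $S$ saturates, and let $j_\star$ be critical. Then the extremal $p_{j_\star}$ (unique, to be shown via a strict-complementarity argument) must satisfy $|p_{j_\star}|=1$ at every point of $S$ that carries dual weight. This forces $S$ to contain the critical contact set $\Sigma_\star$. If $|\Sigma_\star|=4$, then $S=\Sigma_\star$ is fixed. I would then show that some other coordinate $k$ has $\|\mathrm{row}_k(D_S^{-1})\|_1>14101/11400$ by direct computation of the $4\times4$ inverse. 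This is the analogue of the "mandatory nodes" argument for Theorem~\ref{thm:interior-gap}.

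If instead $|\Sigma_\star|=3$, the fourth point is free. I would then need to show that for every $t$, the resulting square system violates the bound for some coordinate. The reduction is to a rational function of $t$ whose excess over $14101/11400$ is positive on $[-1,1]$ minus the nodes. Positivity would be checked by Sturm sequences or by interval subdivision. Handling uniqueness of the extremal and this one-parameter exclusion rigorously is where I expect the real work to lie.
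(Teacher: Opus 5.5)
Your plan matches the paper's proof: an exact five-node left inverse together with a dual cubic certificate for the critical coordinate (coordinate~3, with contacts $t=-1,2/17,1$), then primal--dual equality forcing those three contacts into every saturating schedule, then a one-parameter exclusion over the free fourth node (the paper shows the coordinate-2 row norm exceeds $3/2$ by explicit polynomial positivity). One small correction: the contacts are mandatory because the contact rows are independent and all three representation coefficients are nonzero, not because the extremal polynomial is unique; and you need not compute the noncritical continuous constants, since the five-node left inverse already bounds them.
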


Appendix~\ref{app:exact-gap-proof} gives an exact left inverse and a matching dual lower bound. The third coordinate is uniquely critical, and every saturating schedule must contain the delays $0$, $19/17$, and $2$. With these contacts fixed, the coordinate-$2$ recovery norm satisfies $\rho_2>3/2>14101/11400$ for every fourth delay. Appendix~\ref{app:exact-gap-proof} proves this by polynomial positivity, excluding four-sample saturation. This bound applies under the mandatory-contact restriction, not to the unrestricted four-sample optimum.

The example has analytic, nonoscillatory dynamics and an ECT output space. Its gap arises from the relation between the chosen state norm and the temporal basis; those properties of the output space do not supply the sign geometry used in Theorem~\ref{thm:stp-saturation}.

%% file: sections/07_building.tex
\paragraph{A six-state mechanical system.}
\label{sec:building}

Consider the three-storey undamped building in Figure~\ref{fig:building-model}
\begin{equation}
    \ddot q+Kq=0,
    \qquad
    K=\begin{bmatrix}
    2&-1&0\\
    -1&2&-1\\
    0&-1&1
    \end{bmatrix},
\end{equation}
with state $x=(q_1,q_2,q_3,\dot q_1,\dot q_2,\dot q_3)$ and output $y=q_1$. Thus
\begin{equation}
    A=\begin{bmatrix}0&I\\-K&0\end{bmatrix},
    \qquad C=\begin{bmatrix}1&0&0&0&0&0\end{bmatrix}.
\end{equation}
\begin{figure}[t]
    \centering
    \includegraphics{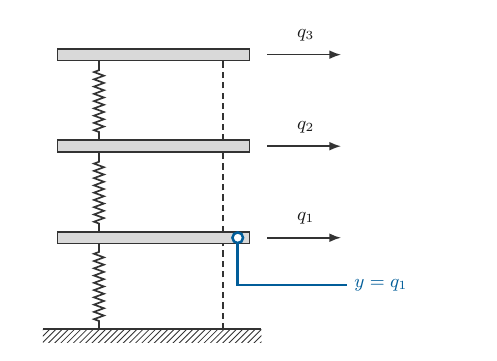}
    \caption{Three-storey lumped-mass model. The coordinates $q_1,q_2,q_3$ are horizontal floor displacements relative to the fixed foundation. Spring symbols denote the lateral couplings represented by $K$; the measured displacement is $y=q_1$. The drawing is schematic.}
    \label{fig:building-model}
\end{figure}

We use $T=10$ and the stated state coordinates. With $S=\operatorname{diag}(I_3,-I_3)$, the identities $SAS=-A$ and $CS=C$ give
\begin{equation}
    Ce^{-A\tau}=Ce^{A\tau}S.
    \label{eq:building-time-reversal}
\end{equation}
Since $S$ preserves the $\ell_\infty$ norm, forward-window and current-state delay margins coincide for every schedule of this same plant.

\begin{proposition}[$6\to7$ gap]
\label{prop:building-gap}
For the building above on $[0,10]$,
\begin{equation}
    m_{\mathrm{obs}}=6,\qquad m_{\mathrm{sat}}=7.
\end{equation}
The third coordinate is uniquely critical, with continuous recovery constant
\begin{equation}
    \alpha_\star\in[2.79547570325,\,2.79547570327].
    \label{eq:building-alpha}
\end{equation}
If $\tau_{c,1},\tau_{c,2},\tau_{c,3}$ denote its exact active delays, a saturating schedule is
\begin{equation}
    \tau_7=(1,\tau_{c,1},2.5,4,\tau_{c,2},7,\tau_{c,3}).
    \label{eq:building-seven-schedule}
\end{equation}
\end{proposition}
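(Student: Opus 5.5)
We follow the template of the exact $4\to5$ gap (Proposition~\ref{prop:four-five-gap}), but certify with interval arithmetic instead of rational arithmetic. By the time-reversal identity \eqref{eq:building-time-reversal}, the problem reduces to the forward rows $h(\tau)=Ce^{A\tau}$ on $[0,10]$. These rows are explicit trigonometric functions, since $K$ has eigenvalues $2-2\cos\bigl((2k-1)\pi/7\bigr)$ and therefore three distinct frequencies. Observability of $(C,A)$ gives $m_{\mathrm{obs}}=6$ at once, as in Section~\ref{sec:memory-budgets}. The rest of the argument has three parts: compute the continuous constants $\alpha_j^{\mathrm{cont}}$, exhibit a saturating seven-point schedule, and exclude every six-point schedule.

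\textbf{Step 1: continuous constants and the critical coordinate.} For each $j$, we solve the semi-infinite dual \eqref{eq:coordinate-duality}, $\alpha_j^{\mathrm{cont}}=\max\{x_j:\sup_\tau|h(\tau)x|\le1\}$, by a fine discretization followed by exchange (Remez-type) refinement. The active set is then certified.
\begin{itemize}
\item A \emph{lower bound} comes from a feasible $x$. Feasibility, $\sup_{[0,10]}|h(\tau)x|\le1$, is verified by interval branch-and-bound on the trigonometric polynomial using derivative bounds.
\item An \emph{upper bound} comes from a primal left-inverse row. It has support on the active delays and satisfies $D^Ta=e_j$, so that $\|a\|_1$ bounds $\alpha_j^{\mathrm{cont}}$.
\end{itemize}
For $j=3$ we expect three active delays $\tau_{c,1},\tau_{c,2},\tau_{c,3}$. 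These are characterized as isolated nondegenerate solutions of the KKT system: $|h(\tau_{c,i})x|=1$, the stationarity condition $h'(\tau_{c,i})x=0$, and a sign-weighted combination of the rows equal to $e_3$. A Krawczyk/interval Newton test proves that the solution exists and is unique in small boxes. This yields the enclosure \eqref{eq:building-alpha}. The same test shows that $\alpha_j^{\mathrm{cont}}<\alpha_\star$ with a margin for every $j\ne3$, so the third coordinate is uniquely critical.

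\textbf{Step 2: saturation with seven samples.} For the schedule \eqref{eq:building-seven-schedule}, we exhibit for each coordinate $j$ a row $a^{(j)}$ with $D_{\tau_7}^Ta^{(j)}=e_j$ and $\|a^{(j)}\|_1\le\alpha_\star$. For $j=3$ the row is the certified dual row supported on $\tau_{c,1},\tau_{c,2},\tau_{c,3}$, so equality holds. For $j\ne3$ it suffices to show, by an interval LP solve, that the finite constant is at most $\alpha_\star$, using the slack from Step 1. Since $\alpha_j(D)\ge\alpha_j^{\mathrm{cont}}$ always holds, this gives $\gamma_\infty(D_{\tau_7})=\gamma_\infty^{\mathrm{cont}}$ via \eqref{eq:alpha-mu-equivalence}.

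\textbf{Step 3: no six-point schedule saturates (main obstacle).} First, any saturating schedule must contain $\tau_{c,1},\tau_{c,2},\tau_{c,3}$. The reason is that the dual optimum for coordinate $3$ is unique and has exactly these contacts, so a finite schedule attains $\alpha_\star$ only if it contains this support. We certify uniqueness through strict complementarity and nondegeneracy of the certified KKT point. That leaves three free delays $(\sigma_1,\sigma_2,\sigma_3)\in[0,10]^3$. We then show that for every such triple, some coordinate $k$ has $\alpha_k(D)>\alpha_\star$ or $D$ is singular.

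This three-dimensional global exclusion is the hard part, and it must also handle degenerate near-singular configurations. The plan is interval branch-and-bound over the cube. On each box, we bound $\alpha_k$ from below by a fixed dual witness $x$. The witness must be chosen robustly across the box, with $|h(\tau_{c,i})x|\le1$ and $|h(\sigma)x|\le1$ holding uniformly over $\sigma$ in the box, while $x_k>\alpha_\star$. Boxes that no single witness covers are subdivided.

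Near coincidences $\sigma_i\approx\tau_{c,\ell}$, or $\sigma_i\approx\sigma_{i'}$, the schedule effectively loses rank. There we use rank-deficient witnesses: a nonzero $x$ in the kernel direction, scaled up, makes $\alpha$ arbitrarily large, and this is certified by Taylor models. Finally, the symmetry $\sigma\mapsto$ permutation reduces the domain to ordered triples.
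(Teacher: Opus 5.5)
Your proposal is correct and has the same structure as the paper's computer-assisted proof: Krawczyk-certified KKT contacts with a globally verified dual witness for coordinate~3, mandatory contacts, a box cover of the ordered free-delay triples by box-uniform dual witnesses, and interval linear solves for the seven-node schedule. (The mandatory-contact step actually rests on the nonzero primal weights and independent contact rows that your strict-complementarity/nondegeneracy check supplies, not on uniqueness of the dual optimum.) The paper differs only in details. It obtains unique criticality from the seven-node row-norm bounds rather than from separate continuous computations for $j\neq3$. It bounds $|H'(t)w|$ by $\sqrt{E(w)}$ using the conserved energy $E(w)=q^TKq+v^Tv$. It needs no special rank-deficient witnesses near coincidences, because generic LP-proposed witnesses with bisection already cover those boxes.
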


The proof is computer-assisted; its interval inclusions and inequalities are given in Appendix~\ref{app:building-proof}. The argument has two parts. First, primal--dual equality and uniqueness of the critical representation force every saturating schedule to contain three delays, approximately $1.848608$, $5.620512$, and $7.986152$. A six-sample schedule can then add only three free delays. A finite interval subdivision excludes saturation throughout their ordered domain. Second, interval bounds on the reconstruction rows show that \eqref{eq:building-seven-schedule} attains the continuous optimum. The quantitative lower bound $2.8056>\alpha_\star$ obtained in the exclusion step applies to schedules containing the three mandatory contacts; the unrestricted six-sample conclusion is strict nonattainment of the optimum.

The gap therefore occurs in a mechanical system with direct displacement sensing, as well as in the polynomial construction of Section~\ref{sec:exact-counterexample}.

%% file: sections/07b_numerical_comparison.tex
\subsection{Measurement requirements in the examples}
\label{sec:schedule-comparison}

Table~\ref{tab:memory-hierarchy} summarizes the measurement requirements. We give numerical reconstruction gains for the examples to quantify the difference between exact observability and minimax accuracy.

\begin{table}[t]
\centering
\caption{Numbers of measurements required for exact reconstruction and continuous-history minimax accuracy.}
\label{tab:memory-hierarchy}
\begin{tabular}{lccc}
\toprule
Class / example & $n$ & $m_{\mathrm{obs}}$ & $m_{\mathrm{sat}}$ \\
\midrule
General scalar-output LTV (analytic sharpness) & $n$ & $n$ & $\le n^2$ (sharp) \\
Strictly totally positive scalar LTI & $n$ & $n$ & $n$ \\
Endpoint-observed cooperative chain & $n$ & $n$ & $n$ \\
Interior-sensing family, \eqref{eq:interior-gap-range} & $3$ & $3$ & $4$ \\
Exact nilpotent ECT example & $4$ & $4$ & $5$ \\
Three-storey building & $6$ & $6$ & $7$ \\
\bottomrule
\end{tabular}
\end{table}

\paragraph{Reconstruction accuracy with fewer measurements.}
Table~\ref{tab:numerical-memory-gaps} compares saturating schedules with lower-cardinality schedules found numerically. Global optimality of the latter schedules has not been proved.

\begin{table}[t]
\centering
\caption{Numerical reconstruction gains for the strict-gap examples. Smaller $\gamma_\infty$ is better.}
\label{tab:numerical-memory-gaps}
\begin{tabular}{lccc}
\toprule
Example & Schedule & $m$ & $\gamma_\infty$ \\
\midrule
Polynomial & best found & $4$ & $1.303034$ \\
Polynomial & saturating & $5$ & $1.236930$ \\
Building & best found & $6$ & $2.864309$ \\
Building & saturating & $7$ & $2.795476$ \\
\bottomrule
\end{tabular}
\end{table}

For the four-state example, repeated unrestricted searches yielded endpoints $0$ and $2$. Optimization of the two interior delays gives
\begin{equation}
    (0,\,0.58708375,\,1.24583768,\,2),
    \label{eq:counterexample-four-delay}
\end{equation}
with gain $1.30303404$, above the continuous value $\alpha_\star=14101/11400\approx1.23692982$. The proof's bound $\rho_2>3/2$ applies only after imposing the three contacts required for saturation; it is not a bound on the unrestricted four-delay optimum.

For the building, an unrestricted differential-evolution search over six delays found
\begin{equation}
\begin{split}
    \tau_6^{\mathrm{num}}
    \approx{}&(1.02001,2.11812,4.25925,\\
    &5.63353,6.78593,8.02749),
\end{split}
\end{equation}
with gain $2.86431$, compared with $2.7954757$ for the saturating seven-delay schedule. The interval proof excludes saturation by any six-delay schedule. Its quantitative bound $\gamma_\infty>2.8056$ applies to schedules containing the three mandatory critical contacts, not to the unrestricted six-delay problem.

%% file: sections/10_numerics.tex
\section{Numerical examples}
\label{sec:numerics}
\label{sec:robustness}

We compare measurement schedules in a seven-state RC chain and in the four-state polynomial system of Section~\ref{sec:lti}. The RC chain illustrates the effect of timing and window length when seven measurements suffice for continuous-history accuracy. The polynomial system shows the improvement obtained by adding a fifth measurement.

\subsection{Models and simulation conditions}

Figure~\ref{fig:rc7-ladder} shows the RC chain. Its state $x=(v_1,\ldots,v_7)^T$ consists of the nodal voltages, and $y=v_1$ is the endpoint measurement. The capacitances $c_i$ and conductances $k_i$ are one in normalized quantities. With both ends resistively connected to ground, the voltage equations are
\begin{equation}
    c_i\dot v_i=k_{i-1}(v_{i-1}-v_i)+k_i(v_{i+1}-v_i),
    \qquad v_0=v_8=0.
\end{equation}
Thus
\begin{equation}
    A=\operatorname{tridiag}(1,-2,1),\qquad C=e_1^T,
    \label{eq:rc7-plant}
\end{equation}
and all eigenvalues of $A$ are negative. Corollary~\ref{cor:endpoint-chain} gives $m_{\mathrm{obs}}=m_{\mathrm{sat}}=7$ in the nodal infinity norm.

\begin{figure}[t]
    \centering
    \includegraphics[width=\linewidth]{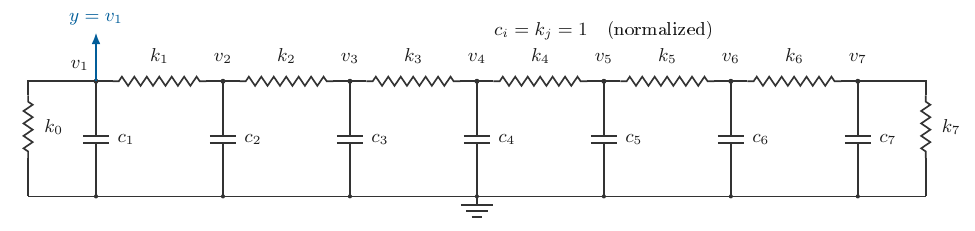}
    \caption{Seven-node RC chain with endpoint voltage measurement $y=v_1$. The capacitances are $c_1,\ldots,c_7$; the resistor labels $k_0,\ldots,k_7$ denote conductances. Both endpoint resistors connect to ground. Unit capacitances and conductances in normalized quantities give \eqref{eq:rc7-plant}.}
    \label{fig:rc7-ladder}
\end{figure}

Both systems are autonomous, with $u=0$. The RC initial voltages are
\begin{equation}
    x_0=(1,0.875,0.75,0.625,0.5,0.375,0.25)^T;
\end{equation}
the polynomial system uses $x_0=(1,-0.5,0.25,-0.125)^T$. The true state and each delayed noise-free output are evaluated using the matrix exponential. Reconstruction starts after the window is filled and uses 2001 equally spaced times in $[T,5T]$, with $T=6$ for the RC chain and $T=2$ for the polynomial system.

The measurement-noise bound is $\delta=10^{-3}$. Within each system, all schedules receive delayed values of the same scalar noise signal, denoted by $\nu$ here to distinguish it from the voltages. We use a sinusoid
\begin{equation}
    \nu_s(t)=\delta\sin(4\pi t/T+0.3)
\end{equation}
and a broader-band signal
\begin{equation}
    \nu_b(t)=\frac{\delta}{16}\sum_{\ell=1}^{16}\sin(\omega_\ell t+\phi_\ell).
\end{equation}
The angular frequencies are logarithmically spaced from $2\pi/T$ to $200\pi/T$. The phases are drawn uniformly from $[0,2\pi)$ with a fixed random seed of 41 and then held fixed. Both signals satisfy $|\nu(t)|\le\delta$; the broader-band signal is a finite sinusoidal sum.

For square information matrices we use their inverses. The RC contacts and weights are formed with 60-digit arithmetic before the trajectory evaluations in standard floating-point arithmetic; the longer-window checks below use 90 digits. The five-sample estimator uses the exact optimal left inverse in Appendix~\ref{app:exact-gap-proof}. Each reconstruction is compared with
\begin{equation}
    \|\widehat x(t)-x(t)\|_\infty\le\delta\gamma_\infty(\tau).
    \label{eq:simulation-error-bound}
\end{equation}

To check sharpness separately, choose a row $j$ of $L_\tau$ with the largest $\ell_1$ norm and prescribe
\begin{equation}
    \nu(t_\star-\tau_i)=\delta\operatorname{sign}(L_{ji}),\qquad t_\star=3T.
\end{equation}
Linear interpolation between these distinct measurement times gives a continuous bounded scalar noise signal. At $t_\star$, its coordinate-$j$ error equals $\delta\|\operatorname{row}_j L_\tau\|_1$. This signal depends on the estimator; the figures mark its error at that single time separately from the two common-noise trajectories.

\subsection{Timing and window length in the RC chain}

\begin{figure}[H]
    \centering
    \includegraphics[width=\linewidth]{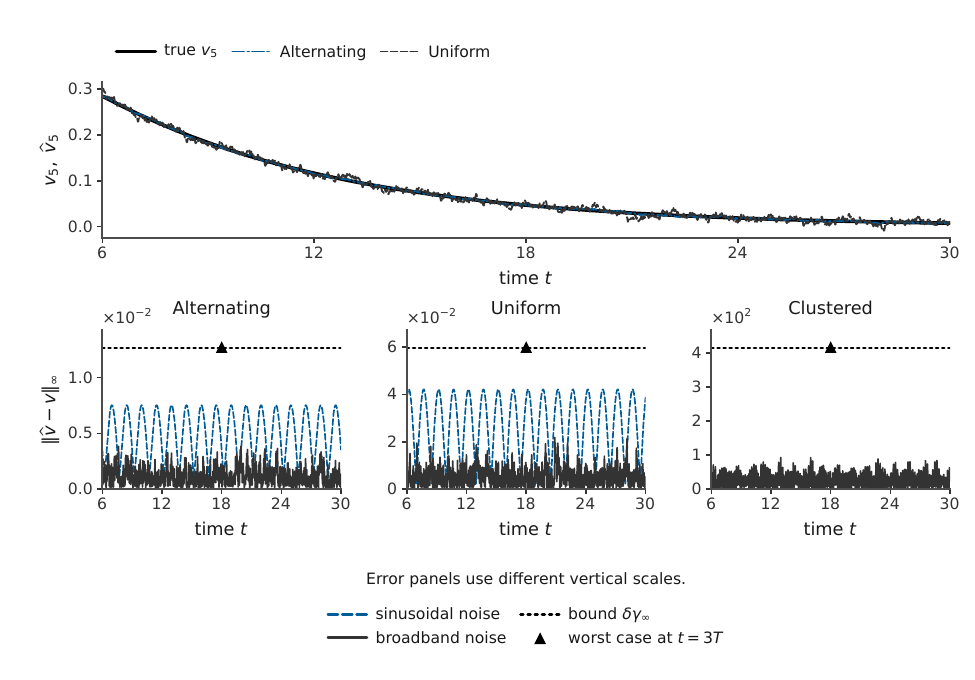}
    \caption{Current-state reconstruction in the seven-node RC chain \eqref{eq:rc7-plant}, with $u=0$, $T=6$, $\delta=10^{-3}$, and the initial voltages specified above. Top: the fifth voltage and its alternating and uniform estimates under the common broader-band noise. Lower panels: infinity-norm errors under a bounded sinusoid (dashed) and a sixteen-sinusoid signal (solid), with different vertical scales. Dotted lines give $\delta\gamma_\infty$; triangles mark attainment at $t=18$ by a separately chosen bounded scalar noise history for each estimator. The three seven-measurement schedules are specified in the text. Time and amplitudes are normalized.}
    \label{fig:ijc-rc7-trajectories}
\end{figure}

At $T=6$, the alternating contacts give the seven-delay schedule
\begin{equation}
\begin{split}
    \tau_7^{\mathrm{alt}}\approx{}&(0,2.29350070,3.99909195,4.98585012,\\
    &5.57673158,5.89772672,6).
\end{split}
    \label{eq:rc7-delays}
\end{equation}
The uniform delays are $(0,1,\ldots,6)$, and the clustered delays are $(0,0.2,\ldots,1.2)$. All three schedules have full rank. Their gains are approximately $12.6894$, $59.7770$, and $4.16327\times10^5$. The alternating schedule therefore reduces the guaranteed error by a factor of $4.71$ relative to the uniform schedule. The clustered case illustrates the effect of concentrating the measurements in a short part of the window.

Figure~\ref{fig:ijc-rc7-trajectories} shows the fifth voltage, which sets the largest coordinate recovery gain for \eqref{eq:rc7-delays}, and the reconstruction errors. All sampled errors lie below their bounds. Under sinusoidal noise, the maximum errors are $0.007529$, $0.04217$, and $0.26197$; under broader-band noise they are $0.003829$, $0.02181$, and $92.793$. The respective bounds are $0.012689$, $0.059777$, and $416.327$. Separately chosen disturbances attain each bound at $t_\star=18$ to numerical precision. The large amplification in the clustered case occurs despite exact noise-free observability.

Window length affects both the gain and the location of the retained measurements. Figure~\ref{fig:ijc-rc7-horizon} compares $T=1,\ldots,10$, with additional alternating schedules at $T=12,16,20$. The numerical transition occurs at $S_*\approx9.18905$. In the computed windows above this value, the forward alternating contacts $s_i^*$ are unchanged, while their current-state delays are $\tau_i(T)=T-s_i^*$. The smallest and largest delays are therefore $T-S_*$ and $T$, and their difference remains $S_*$. This constant span does not imply a finite maximum useful delay.

\begin{figure}[H]
    \centering
    \includegraphics[width=\linewidth]{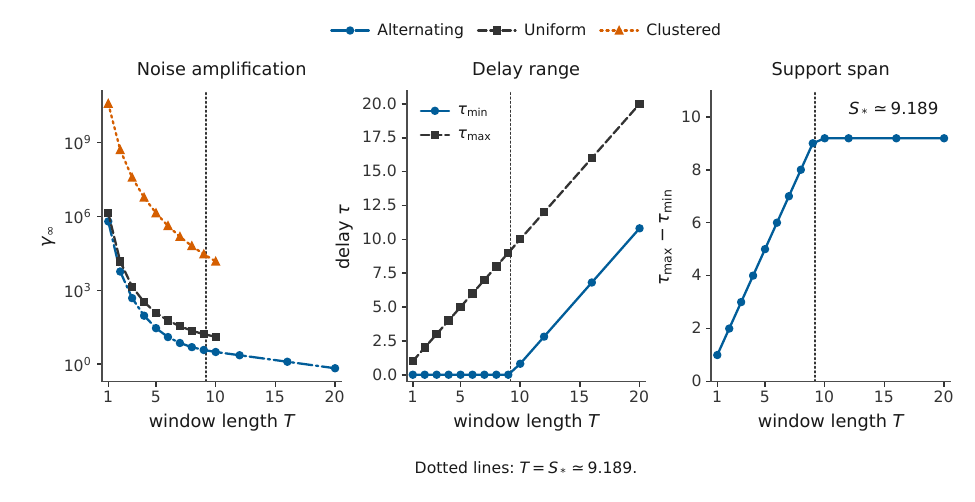}
    \caption{History-window length in the seven-node RC chain. Left: nodal noise gain on a logarithmic scale; uniform and clustered schedules end at $T=10$, while the alternating schedule includes $T=12,16,20$. The clustered delays span $[0,T/5]$. Middle: smallest and largest alternating delays. Right: their difference. Dotted vertical lines mark the numerical transition $S_*\approx9.18905$. The long-window support translates toward the oldest part of the history with a fixed span, while its gain continues to decrease. Markers denote computed windows; connecting lines are not fitted curves. All times are normalized.}
    \label{fig:ijc-rc7-horizon}
\end{figure}

The distinction follows from \eqref{eq:orientation-relation}. If $L_{s^*}\mathcal O_{s^*}=I$, the same forward measurements estimate the current state with
\begin{equation}
    K_\tau(T)=e^{AT}L_{s^*}.
\end{equation}
For any fixed full-rank forward schedule, this gain tends to zero as $T\to\infty$ because $A$ is Hurwitz. A positive plateau in the optimal current-state gain therefore cannot persist. For the computed alternating schedules, $\gamma_\infty$ decreases from $3.14367$ at $T=10$ to $0.68656$ at $T=20$, although the support span is unchanged. This conclusion uses exact dynamics and measurement noise only; the sampling count remains seven.

\subsection{Four versus five measurements}

For the causal polynomial realization \eqref{eq:counterexample-delay-shift}, compare the four-sample schedule \eqref{eq:counterexample-four-delay} with the saturating five-sample schedule \eqref{eq:counterexample-five-delay}. The respective bounds are $1.303034\times10^{-3}$ and $1.236930\times10^{-3}$. The additional sample therefore reduces the guaranteed error by approximately $5.1\%$ for these two schedules.

\begin{figure}[H]
    \centering
    \includegraphics[width=\linewidth]{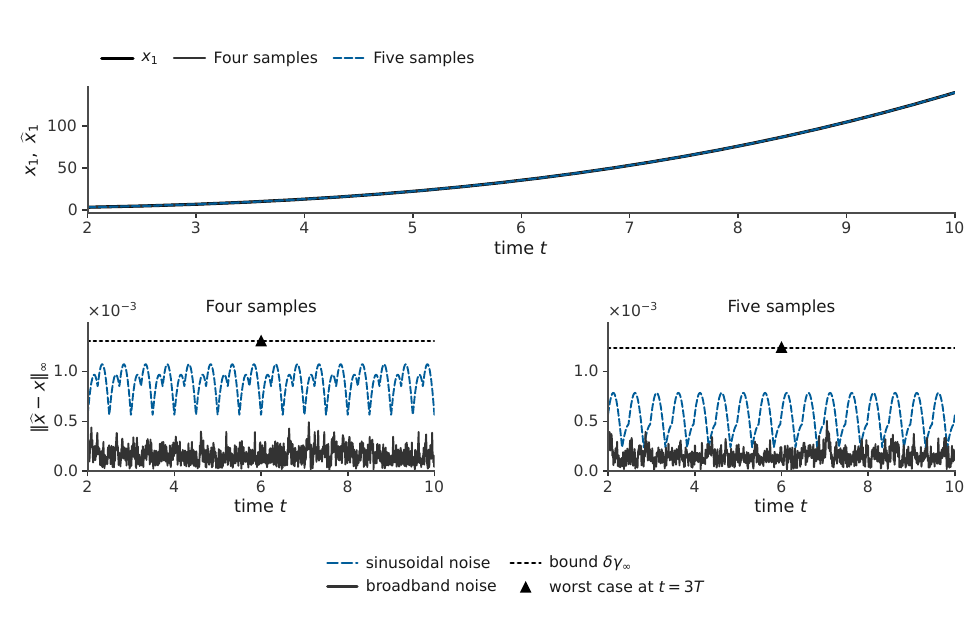}
    \caption{Reconstruction for the four-state polynomial system \eqref{eq:counterexample-delay-shift}, with $u=0$, $x_0=(1,-0.5,0.25,-0.125)^T$, $T=2$, and $\delta=10^{-3}$. Top: first state component and its four- and five-sample estimates under the common broader-band noise. Lower panels share the same vertical scale and show error norms for the two bounded noise signals. Dotted lines give the respective minimax bounds; triangles mark estimator-dependent sharpness checks at $t=3T$. The noise signals are those specified above and are identical for both schedules.}
    \label{fig:ijc-four-state-trajectories}
\end{figure}

Figure~\ref{fig:ijc-four-state-trajectories} shows both reconstructions and their error norms. The state and estimate curves nearly coincide at the plotted scale, while the lower panels resolve the errors. The maximum sinusoidal-noise errors are $1.0713\times10^{-3}$ and $0.7846\times10^{-3}$, respectively. For the broader-band signal, the maxima are $0.4912\times10^{-3}$ and $0.5042\times10^{-3}$. Thus the five-sample estimator improves the worst-case guarantee without implying a smaller error for every individual disturbance. Both bounds are attained in the separate sharpness check.

%% file: sections/11_discussion.tex
\section{Discussion and conclusions}
\label{sec:discussion}
\label{sec:conclusion}

The main distinction is between selecting enough measurements for exact reconstruction and selecting enough to retain all bounded-error information in a continuous history. For observable scalar-output LTV systems, the former requires $n$ samples and the latter can require $n^2$. The quadratic requirement persists for real-analytic information curves. Strict total positivity gives a stronger reduction: one $n$-point schedule attains every coordinate optimum, as does endpoint sensing in tridiagonal cooperative chains. The interior-sensing family shows that positivity and distinct real eigenvalues alone do not ensure this conclusion. Its strict gap holds over an explicit parameter range, while the nilpotent and mechanical examples show related obstructions in other system classes.

The results are pointwise in reconstruction time and depend on the chosen state norm. They count actual measurement times; they do not specify online buffer storage. For LTV systems the optimal delay set may change with time. For a fixed pattern whose information matrix is continuous and full rank on a compact time interval, the margin has a positive minimum. The pointwise error bound is therefore uniform on that interval, although the pattern need not remain optimal.

The explicit LTI families establish strict gaps of one sample; they do not imply a universal bound of $n+1$. A strict three-state gap also persists under small stabilizing damping. Consequently Hurwitz stability, tridiagonal positivity and distinct real eigenvalues do not suffice for three-sample saturation with an interior measurement. The exact count for the damped family and the case of physical RC chains with equal coordinate-error weights remain open. Determining the worst-case LTI number of measurements for saturation between $n$ and $n^2$ remains open. Another question is which LTV systems admit a fixed relative-delay pattern that saturates the history uniformly in time.

The seven-state RC chain illustrates the reduction in physical nodal coordinates. It also distinguishes sample count, maximum delay and support span: seven samples continue to suffice as the window grows, and their span can remain fixed while their location and reconstruction gain change. The trajectory comparisons illustrate the meaning of the minimax guarantee. A smaller worst-case bound does not impose an ordering of errors for every individual disturbance. The sample-count results assume exact models and measurement times. Extending the structural classification to uncertain models or constrained timing is a separate question.

%% file: sections/appendix_history_compression.tex
\section{Proof of finite temporal-history saturation}
\label{app:history-compression}

\subsection{The upper bound}

Fix the reconstruction time and identify the information rows with vectors $h(\tau)\in\R^n$. Let $Q=\operatorname{conv}\{\pm h(\tau):\tau\in[0,T]\}$ and, for a finite schedule, $Q_\tau=\operatorname{conv}\{\pm h(\tau_i)\}$. Their support functions are the continuous and sampled output norms. Since $h_{B_1^n}(x)=\|x\|_\infty$, convex set inclusion gives
\begin{equation}
    \mu_c=\max\{\rho:\rho B_1^n\subseteq Q\},
    \qquad
    \mu_\infty(\tau)=\max\{\rho:\rho B_1^n\subseteq Q_\tau\}.
    \label{eq:app-cont-crosspoly}
\end{equation}

For each $j$, the maximal radial point $q_j=\rho_j e_j\in Q$ has $\rho_j\ge\mu_c$ and lies in an exposed face of dimension at most $n-1$. In any convex representation of $q_j$ by signed temporal generators, every generator with positive weight lies in that face: equality in the supporting linear functional forces this. Carath\'eodory's theorem within the face therefore uses at most $n$ generators. Collecting their times over all $n$ coordinates gives a schedule with at most $n^2$ times. Its symmetric hull contains $\pm q_j$ and thus every $\pm\mu_c e_j$, so it contains $\mu_c B_1^n$ and attains $\mu_c$. The reverse inequality follows from $Q_\tau\subseteq Q$. Finally, the continuous rows span $\R^n$, so $n$ independent rows can be selected and fewer than $n$ cannot give full rank. This proves $m_{\mathrm{obs}}=n\le m_{\mathrm{sat}}\le n^2$.

\subsection{Analytic sharpness}

Fix $n\ge2$ and $\mu>0$. For each $j$, choose $n$ affinely independent vectors $w_{jk}\in e_j^\perp$ with barycenter zero and $\|w_{jk}\|_\infty\le1$. For $0<\varepsilon<\mu$, set
\begin{equation}
    v_{jk}=\mu e_j+\varepsilon w_{jk},
    \qquad
    Q=\operatorname{conv}\{\pm v_{jk}:j,k=1,\ldots,n\}.
    \label{eq:app-sharp-vjk}
\end{equation}
The exposed face in direction $e_j$ is exactly
\begin{equation}
    F_j=\operatorname{conv}\{v_{j1},\ldots,v_{jn}\}:
    \label{eq:app-sharp-face}
\end{equation}
its vertices have $j$th coordinate $\mu$, while all other signed generators have smaller $j$th coordinate. Its barycenter is $\mu e_j$, with all barycentric weights $1/n$. Hence $\mu B_1^n\subseteq Q$ and $h_Q(e_j)=\mu$, giving margin $\mu$.

Enumerate the $N=n^2$ vertices as $v_1,\ldots,v_N$ and choose distinct nodes $s_a\in(0,T)$. Define
\begin{equation}
    p_a(\tau)=\prod_{b\ne a}(\tau-s_b)^2,
    \qquad
    \lambda_a(\tau)=\frac{p_a(\tau)}{\sum_b p_b(\tau)},
    \qquad
    h(\tau)=\sum_a\lambda_a(\tau)v_a.
\end{equation}
The denominator is positive on $\R$: all terms are positive away from the nodes, and $p_a(s_a)>0$. Thus the curve is real analytic. Its nonnegative weights sum to one and satisfy $\lambda_a(s_b)=\delta_{ab}$, so $h(s_a)=v_a$ and its signed convex hull is $Q$. Away from the nodes every weight is positive, including weights on vertices outside each $F_j$. Consequently $-\mu<e_j^Th(\tau)<\mu$ there. The only signed curve points on $F_j$ are its $n$ prescribed vertices.

A schedule attaining margin $\mu$ must have $\mu e_j$ in its sampled hull. Equality in the supporting functional restricts any representation to $F_j$. Since $\mu e_j$ lies in the relative interior of the simplex $F_j$, all its $n$ vertices are necessary. Their nodes are disjoint across $j$, forcing all $n^2$ times. Sampling those times suffices. Finally, $A(s)\equiv0$ and $C(s)=h(t_\star-s)^T$ realize this curve as a real-analytic scalar-output LTV system at reconstruction time $t_\star$. This proves the sharpness assertion of Theorem~\ref{thm:n2-saturation}.

%% file: sections/appendix_lti_gaps.tex
\section{Proofs of the LTI structural results}
\label{app:lti-gaps}

\subsection{Endpoint sensing in tridiagonal chains}
\label{app:endpoint-chain}

It suffices to consider $C=e_1^T$; reversing the coordinates treats $e_n^T$. Let $P=\operatorname{diag}(1,-1,1,\ldots)$ and $B=-PAP$. Then $B$ is tridiagonal cooperative and
\begin{equation}
    Ce^{-A\tau}P=e_1^Te^{B\tau}=:h(\tau).
\end{equation}
The column sign change preserves every coordinate recovery constant. We prove the result for $h$.

For $k\le n$ set $R_k=\operatorname{col}_{r=0}^{k-1}e_1^TB^r$. Tridiagonality implies that its only possibly nonzero $k$-column minor uses columns $1,\ldots,k$, and
\begin{equation}
    \det R_k[:,1{:}k]=\prod_{r=1}^{k-1}b_{r,r+1}^{\,k-r}>0.
\end{equation}
For any ordered $k$-subset $J$ of coordinates, Cauchy--Binet therefore gives the temporal Wronskian identity
\begin{equation}
    W_J(\tau)=\left(\prod_{r=1}^{k-1}b_{r,r+1}^{\,k-r}\right)
        \det(e^{B\tau}[1{:}k,J]).
    \label{eq:endpoint-wronskian}
\end{equation}
The classical tridiagonal total-positivity theorem \cite{margaliot2019tpds} makes the right-hand side positive for $\tau>0$. The Wronskian criterion for Chebyshev systems then implies positivity of every ordered temporal collocation minor at positive times. At zero the minors are nonnegative by continuity. Every square collocation matrix at distinct nonnegative times remains nonsingular: if the first time is zero, expansion along $h(0)=e_1^T$ leaves a strictly positive $(n-1)$-minor. Its inverse thus has the nonnegative checkerboard sign pattern.

The leading Wronskians for $J=\{1,\ldots,k\}$ are positive also at zero by \eqref{eq:endpoint-wronskian}. Hence $h_1,\ldots,h_{n-1}$ form a Haar space on $[0,T]$. The best uniform approximation of $h_n$ from that space has a normalized residual $q(\tau)=h(\tau)z$ with $n$ alternating unit contacts. If $O$ is the corresponding collocation matrix and $s$ is the alternating contact vector, then $Oz=s$. Its inverse signs give
\begin{equation}
    \alpha_j(\tau^\star)=\|\operatorname{row}_j(O^{-1})\|_1=|z_j|
    \le\alpha_j^{\mathrm{cont}}\le\alpha_j(\tau^\star).
\end{equation}
Both inequalities are equalities. This proves simultaneous coordinatewise attainment, including a possible zero-delay contact, and hence Corollary~\ref{cor:endpoint-chain}.

\subsection{The three-state interior-sensing family}
\label{app:interior-gap}

Set $P=\operatorname{diag}(-1,1,-1)$ and $u=e^{\omega\tau}$. Direct calculation for \eqref{eq:interior-chain-model} yields
\begin{equation}
    G(u):=Ce^{-A_w\tau}P
    =\begin{bmatrix}
    \dfrac{3(u^2+2u-3)}{16}&
    \dfrac{3u^2+2u+3}{8}&
    \dfrac{3u^2-2u-1}{16w}
    \end{bmatrix},\quad 1\le u\le R.
    \label{eq:interior-quadratic-row}
\end{equation}
For example, the polynomial coefficients satisfy $G(1)=e_2^T$ and $uG'(u)=G(u)(-PA_wP/\omega)$, which proves the identity by uniqueness of the matrix differential equation. Column signs and the monotone change from $\tau$ to $u$ preserve the sampling problem. The coefficient matrix is invertible, so any three distinct nodes give observability.

If $p(u)=G(u)z=c_0+c_1u+c_2u^2$, then
\begin{equation}
    z_1=-c_0+c_1+\tfrac13c_2,\qquad
    z_2=c_0+c_1+c_2,\qquad
    z_3=w(-c_0-3c_1+3c_2).
    \label{eq:interior-coordinate-functionals}
\end{equation}
Write $m=(R+1)/2$ and $d=(R-1)^2$. Quadratic interpolation at $1,m,R$ represents $z_1$ with coefficients
\begin{equation}
    \left(-\frac{3R^2+12R+1}{3d},\quad
    \frac{8(3R+1)}{3d},\quad
    -\frac{6R+10}{3d}\right).
    \label{eq:interior-first-row}
\end{equation}
Their norm is $a(R)$. The polynomial
\begin{equation}
    q(u)=1-2\left(\frac{2u-R-1}{R-1}\right)^2
\end{equation}
satisfies $|q(u)|\le1$, with equality precisely at $1,m,R$ and signs $(-,+,-)$. Substitution in \eqref{eq:interior-coordinate-functionals} gives $z_1=a(R)$. Thus the continuous first-coordinate optimum is $a(R)$. All coefficients in \eqref{eq:interior-first-row} are nonzero. Primal--dual equality forces every optimal first-coordinate representation onto these three contacts, and independence makes the representation unique. Every schedule attaining the common value $a(R)$ must therefore contain all three nodes.

The second-coordinate optimum is $1$: measurement at $u=1$ recovers $z_2$ directly, and the continuously feasible polynomial $p\equiv1$ attains $z_2=1$.

For the third coordinate let
\begin{equation}
    s=\frac{3(R+1)}{R-3},\qquad D=R^2-6R-3.
\end{equation}
For $R>9$, one has $1<s<m<R$ and $D>0$. The exact representation
\begin{equation}
    z_3=w\left(-\frac{(R-3)^2}{D}p(s)+\frac{12}{D}p(R)\right)
    \label{eq:interior-third-two-nodes}
\end{equation}
follows by checking the constant, linear and quadratic coefficients. Its norm is $wc(R)$. The matching feasible dual polynomial
\begin{equation}
    p_3(u)=-1+2\left(\frac{u-s}{R-s}\right)^2
\end{equation}
has values $-1,+1$ at $s,R$ and satisfies $|p_3(u)|\le1$ on $[1,R]$ because $s<m$. Hence $\alpha_3^{\mathrm{cont}}=wc(R)$, and \eqref{eq:interior-gap-range} gives the common continuous gain $a(R)$.

At the mandatory three nodes $1,m,R$, the unique third-coordinate interpolation row is
\begin{equation}
    \frac{w}{d}\left(-(R-9)(R+1),\ -8(R+3),\ 2(R+7)\right).
\end{equation}
Its norm is $wb(R)>a(R)$. No three-node schedule can therefore attain the continuous common gain. Adding $s$ retains the optimal first-coordinate representation and the direct second-coordinate measurement, while supplying \eqref{eq:interior-third-two-nodes}. These four nodes attain all three continuous coordinate optima. Finally,
\begin{equation}
    b(R)-c(R)=\frac{4(R-9)(R+1)(R+3)}{(R-1)^2(R^2-6R-3)}>0
\end{equation}
proves that the parameter interval is nonempty.

For $R=16,w=7/4$, the four-node matrix has rows $G(u)$ at $u=(1,51/13,17/2,16)$. An exact left inverse is
\begin{equation}
    L_4=\begin{bmatrix}
    -961/675&0&392/675&-106/675\\
    1&0&0&0\\
    0&-1183/628&0&21/157
    \end{bmatrix}.
    \label{eq:interior-exact-left-inverse}
\end{equation}
Its row norms are $1459/675$, $1$ and $1267/628$. For the physical delayed rows, $PL_4$ is the left inverse and has the same norms.

To prove Corollary~\ref{cor:hurwitz-interior-gap}, let $\mu_c$ be the undamped continuous margin and $\mu_3^\star$ the maximum margin over the compact delay cube $[0,T]^3$. Theorem~\ref{thm:interior-gap} excludes equality at every three-delay schedule, including singular schedules with repeated times. Continuity therefore gives $\Delta=\mu_c-\mu_3^\star>0$. The damped information row is $h_\varepsilon(\tau)=e^{\varepsilon\tau}h_0(\tau)$, so
\begin{equation}
    \sup_{\tau\in[0,T]}\|h_\varepsilon(\tau)-h_0(\tau)\|_1
    \le (e^{\varepsilon T}-1)M,
    \qquad M=\sup_{\tau\in[0,T]}\|h_0(\tau)\|_1.
\end{equation}
The minimum gain is one-Lipschitz in this row norm, uniformly over schedules; the same estimate applies to the continuous margin. Their gap is thus at least $\Delta-2(e^{\varepsilon T}-1)M$, which is positive for sufficiently small $\varepsilon>0$. Scalar multiplication of each row preserves sampled rank, and the eigenvalues become $-\varepsilon,-\omega-\varepsilon,-2\omega-\varepsilon$. This proves the corollary.

\subsection{Exact proof of the four-state gap}
\label{app:exact-gap-proof}

Work in the forward variable $t=\tau-1$ of \eqref{eq:counterexample-delay-shift}. For the nodes
\begin{equation}
    t_5=\left(-1,-\frac35,\frac2{17},\frac35,1\right),
\end{equation}
let $O_5$ have rows $H(t_i)$. Direct multiplication gives $L_5O_5=I$, where
\begin{equation}
    L_5=\begin{bmatrix}
    -\frac{349}{1280}&\frac{95}{768}&0&-\frac5{768}&\frac{39}{1280}\\[1mm]
    -\frac{157}{640}&\frac{45}{128}&0&-\frac{55}{128}&\frac{47}{640}\\[1mm]
    -\frac{783}{1520}&0&\frac{4913}{11400}&0&-\frac{349}{1200}\\[1mm]
    \frac{93}{256}&-\frac{475}{768}&0&\frac{25}{768}&-\frac{39}{256}
    \end{bmatrix}.
    \label{eq:exact-gap-left-inverse}
\end{equation}
Its row norms are $13/30$, $11/10$, $14101/11400$, and $7/6$.

To obtain the matching dual lower bound, set
\begin{equation}
    p_\star(t)
    =\frac{78881+39304t-160106t^2-39304t^3}{81225}.
\end{equation}
The factorizations
\begin{equation}
\begin{aligned}
    p_\star(t)-1&=-\frac{2(17t-2)^2(68t+293)}{81225},\\
    p_\star(t)+1&=-\frac{578(t-1)(t+1)(68t+277)}{81225}
\end{aligned}
\end{equation}
prove $|p_\star(t)|\le1$ on $[-1,1]$, with equality only at $-1$, $2/17$, and $1$. If $c_\star$ is its coefficient vector and $z_\star=M^{-1}c_\star$, then $H(t)z_\star=p_\star(t)$ and
\begin{equation}
    (z_\star)_3=\frac{14101}{11400}=: \alpha_\star.
\end{equation}
Thus the dual lower bound equals the third row norm in \eqref{eq:exact-gap-left-inverse}. This proves the continuous optimum, five-sample attainment, and strict noncritical slack.

If a finite schedule attains this optimum, a coordinate-$3$ reconstruction row $a$ must satisfy
\begin{equation}
    \alpha_\star
    =\sum_i a_i p_\star(t_i)
    \le\sum_i|a_i|
    \le\alpha_\star.
\end{equation}
Equality forces every nonzero coefficient onto the three contacts. Their evaluation rows are independent, and the third row of \eqref{eq:exact-gap-left-inverse} is the unique representation supported there. All three coefficients are nonzero, so all three contacts are mandatory.

A potentially saturating four-point schedule therefore has nodes $(-1,2/17,1,s)$, with $-1<s<1$ and $s\ne2/17$. The unique inverse gives, for its coordinate-$2$ row norm,
\begin{equation}
    \rho_2(s)-\frac32=\begin{cases}
    \displaystyle\frac{5P_1(s)}{4(s-1)(s+1)(17s-2)},&-1<s<3/103,\\[2mm]
    \displaystyle-\frac{P_2(s)}{150(s+1)(17s-2)},&3/103\le s<2/17,\\[2mm]
    \displaystyle-\frac{P_3(s)}{95(s-1)(17s-2)},&2/17<s<1,
    \end{cases}
\end{equation}
where
\begin{equation}
\begin{aligned}
    P_1(s)&=-17s^3+2s^2+17s+6,\\
    P_2(s)&=4063s^2+4435s+1122,\\
    P_3(s)&=2737s^2-3484s+1222.
\end{aligned}
\end{equation}
The first numerator is positive on its interval because
\begin{equation}
    P_1(s)=(2-17s)(s+\tfrac12)^2
        +17(s+\tfrac{77}{136})^2+\frac{55}{1088}.
\end{equation}
The second is positive for $s>0$, and the third has positive leading coefficient and discriminant $-1240200<0$. The denominators have signs $+,-,-$ on their respective intervals. Thus $\rho_2(s)>3/2>\alpha_\star$ for every admissible fourth node, excluding saturation. Any four distinct nodes have full rank because $\det M\ne0$ and the polynomial collocation matrix is Vandermonde. Together with \eqref{eq:exact-gap-left-inverse}, this proves $m_{\mathrm{obs}}=4$ and $m_{\mathrm{sat}}=5$.

\subsection{Computer-assisted proof for the mechanical system}
\label{app:building-proof}

We prove Proposition~\ref{prop:building-gap} by outward-rounded interval inclusions establishing the continuous optimum, six-sample exclusion, and seven-sample attainment.

We use $H(t)=Ce^{At}$ and transfer the result to delayed reconstruction by \eqref{eq:building-time-reversal}. The modal representation gives explicit expressions suitable for interval evaluation. For $k_r\in\{1,3,5\}$,
\begin{equation}
    \omega_r=2\sin\frac{k_r\pi}{14},\qquad
    V_{jr}=\frac2{\sqrt7}\sin\frac{jk_r\pi}{7},
\end{equation}
and, for $j=1,2,3$,
\begin{equation}
    H_j(t)=\sum_{r=1}^3V_{1r}V_{jr}\cos(\omega_rt),
    \qquad
    H_{j+3}(t)=\sum_{r=1}^3V_{1r}V_{jr}
        \frac{\sin(\omega_rt)}{\omega_r}.
\end{equation}
The six modal sine and cosine functions are independent, $V$ is orthogonal, and every $V_{1r}$ is nonzero. The temporal row family therefore spans $\mathbb R^6$, giving $m_{\mathrm{obs}}=6$.

\paragraph{Critical representation and dual feasibility.}
Let $\sigma=(1,-1,-1)$. The twelve unknowns $\theta=(t_1,t_2,t_3,a_1,a_2,a_3,z)$ solve
\begin{equation}
    \sum_{i=1}^3 a_iH(t_i)^T=e_3,
    \qquad H(t_i)z=\sigma_i,
    \qquad H'(t_i)z=0\quad(i=1,2,3).
    \label{eq:building-kkt-system}
\end{equation}
Write these equations as $F(\theta)=0$. For a box $X$ about a numerical centre $\theta_0$ and a nonsingular point preconditioner $R$, interval evaluation establishes the strict Krawczyk inclusion
\begin{equation}
    \theta_0-RF(\theta_0)
    +(I-R[DF](X))(X-\theta_0)
    \subset\operatorname{int}X,
\end{equation}
which encloses a unique root in $X$. Interval bounds also give $a_1>0$, $a_2<0$, and $a_3<0$ throughout $X$. The resulting contact enclosures, widened for decimal display, are
\begin{equation}
\begin{aligned}
    \tau_{c,1}&\in[1.848607960412,\,1.848607960415],\\
    \tau_{c,2}&\in[5.620511863042,\,5.620511863045],\\
    \tau_{c,3}&\in[7.986151862406,\,7.986151862409].
\end{aligned}
\end{equation}
The sum $\alpha=\sum_i|a_i|$ lies in \eqref{eq:building-alpha}.

For $g(t)=H(t)z$, derivative tests isolate exactly five stationary points on $[0,10]$. On each isolating interval, $g''$ has fixed sign and $g'$ changes sign between its endpoints; interval subdivision excludes zeros of $g'$ elsewhere. At the three contact extrema, \eqref{eq:building-kkt-system} gives the exact values $g(t_i)=\sigma_i$, and the signs of $g''$ give strict maxima or minima. The other two extrema and the endpoints satisfy $|g|<1$. Hence $|g(t)|\le1$ everywhere, with equality only at the three contacts. Primal--dual equality yields
\begin{equation}
    z_3=\sum_i a_i g(t_i)=\sum_i|a_i|=\alpha,
\end{equation}
so $\alpha_3^{\mathrm{cont}}=\alpha$. A determinant enclosure excluding zero for the $3\times3$ minor of the contact rows in columns $1,3,5$ proves their independence. Since every $a_i$ is nonzero, the critical representation is unique and each contact is mandatory in a saturating schedule.

\paragraph{Conditional exclusion of six samples.}
It remains to exclude schedules consisting of the three contacts and three free times $0\le s_1\le s_2\le s_3\le10$. The ordered domain is covered initially by the 220 boxes
\begin{equation}
    [i,i+1]\times[j,j+1]\times[k,k+1],
    \qquad 0\le i\le j\le k\le9.
\end{equation}
For a proposed dual vector $w=(q,v)$, the conserved energy
\begin{equation}
    E(w)=q^TKq+v^Tv
\end{equation}
gives $|H'(t)w|\le\sqrt{E(w)}$. Thus on an interval $I=[c-r,c+r]$,
\begin{equation}
    \sup_{t\in I}|H(t)w|
    \le |H(c)w|+r\sqrt{E(w)}.
    \label{eq:building-box-bound}
\end{equation}
For each box, interval arithmetic bounds all three free-time intervals and all three contact enclosures by a common upper bound $\beta_B$. If a positive coordinate satisfies $w_j/\beta_B>\alpha^{\mathrm{upper}}$, then $w/\beta_B$ is dual feasible throughout that box and excludes saturation there. Floating-point linear programs only propose $w$; its displayed decimal coordinates define the witness, and the interval inequality decides acceptance. Boxes for which this inequality is not verified are bisected.

A finite subdivision into 4049 terminal boxes covers the entire ordered domain. On every box, the verified dual lower bound exceeds $2.8056$, while $\alpha^{\mathrm{upper}}<2.79548$. Therefore no six-sample schedule containing the three mandatory contacts attains the continuous optimum. Since every saturating schedule must contain those contacts, saturation is excluded for all six-sample schedules. The quantitative bound $2.8056$ applies only to the contact-constrained domain. The unrestricted conclusion is
\begin{equation}
    \max_{\tau\in[0,10]^6}\mu_\infty^-(\tau)
    <\mu_{\infty,\mathrm{delay}}^{\mathrm{cont}}(10):
\end{equation}
the maximum exists by continuity on the compact delay cube, and equality has been excluded.

\paragraph{Seven-sample attainment.}
For the seven exact nodes in \eqref{eq:building-seven-schedule}, coordinate $3$ uses the three KKT weights. For the remaining coordinates, interval linear solves of $O_{\mathcal S_j}^Ta_j=e_j$ use the following six-node supports; indices refer to the ordered seven-node schedule:
\begin{equation}
\begin{aligned}
    \mathcal S_1&=\{1,2,3,4,5,7\},&
    \mathcal S_2&=\{1,2,4,5,6,7\},\\
    \mathcal S_4=\mathcal S_5&=\{1,3,4,5,6,7\},&
    \mathcal S_6&=\{1,2,4,5,6,7\}.
\end{aligned}
\end{equation}
Rounded upward, the corresponding row norm bounds are $2.26224$, $2.18332$, $2.62589$, $2.74306$, and $1.78400$. All lie below the lower bound on $\alpha$. Thus coordinate $3$ is uniquely critical for the continuous problem, the seven-node schedule attains its optimum, and the six-sample exclusion proves $m_{\mathrm{sat}}=7$.